\DeclareMathOperator*{\argmin}{argmin} 
\DeclarePairedDelimiter\floor{\lfloor}{\rfloor}
\newtheorem{theorem}{Theorem}
\title{An Encoding Approach for Stable Change Point Detection}
\author[1]{Xiaodong Wang}
\author[1]{Fushing Hsieh}
\affil[1]{Department of Statistics, University of California, Davis.}
\date{} % clear date
\begin{document}

\maketitle

\section*{Abstract}

Without imposing prior distributional knowledge underlying multivariate time series of interest, we propose a nonparametric change-point detection approach to estimate the number of change points and their locations along the temporal axis. We develop a structural subsampling procedure such that the observations are encoded into multiple sequences of Bernoulli variables. A maximum likelihood approach in conjunction with a newly developed searching algorithm is implemented to detect change points on each Bernoulli process separately. Then, aggregation statistics are proposed to collectively synthesize change-point results from all individual univariate time series into consistent and stable location estimations. We also study a weighting strategy to measure the degree of relevance for different subsampled groups. Simulation studies are conducted and shown that the proposed change-point methodology for multivariate time series has favorable performance comparing with currently popular nonparametric methods under various settings with different degrees of complexity. Real data analyses are finally performed on categorical, ordinal, and continuous time series taken from fields of genetics, climate, and finance.

%Without prior knowledge of underlying distributions, we propose a nonparametric change point detection approach to estimate the number of change points and their locations in multivariate settings. By applying a structural subsampling procedure, the observations are encoded into multiple sequences of Bernoulli variables. A maximum likelihood approach in conjunction with a new searching algorithm is implemented to detect change points for each Bernoulli process separately. Under some mild conditions, aggregation statistics are then proposed to provide consistent estimation and stability change point detection. We also suggest a weighting strategy to measure the degree of relevance for different sample groups. Simulation studies show that the proposed approach has favorable performance compared with other nonparametric methods. Lastly, we show that the approach can be applicant to categorical, ordinal, and continuous data in analysis of genetics, climate, and finance.

\section{Introduction}

Upon any nonstationary time series of any dimensions, abrupt distributional changes are ubiquitous patterns of great interest found in many sciences involving with evolving systems. Change points as temporal locations of such occurrences and their multiplicity are key parts of deterministic structures of the time series under study. Nowadays, change-point analysis has well recognized in statistics literature and beyond as an essential scientific methodology that aims to detect change points on the time-ordered observations and then partition the whole time series into homogeneously distributional segments. Change-point analysis can be traced back to 1950s (\cite{Page}; \cite{Chernoff and Zacks}; \cite{Kander and Zacks}). So far, it has been playing a crucial role in diverse fields including bioinformatics (\cite{Picard}; \cite{Muggeo and Adelfio}), behavioural science(\cite{Rosenfield}; \cite{Hoover}), neuroimage (\cite{Bosc}), climate science (\cite{Robbins}), finance (\cite{Talih and Hengartner}), and speech recognition (\cite{Malladi}).

In general, such an analysis can be conducted under either parametric or nonparametric settings. Parametric approaches rely heavily upon assumptions of underlying distributions belonging to a known family. Likelihood or penalized likelihood functions are generally involved (\cite{Yao}; \cite{Chen and Gupta}; \cite{Bai and Perron}). In contrast, nonparametric approaches make very few assumptions regarding stochasticity underlying the time series. The likelihood principle is not directly applicable. Nonetheless, such approaches fit well in a wider variety of applications. Such an advantageous feature has popularity and a vast amount of research attention in the past decade.

In fact, the likelihood principle is still applicable under a rather mild independence assumption that, at least, approximately endorses some distributional characters upon observed or computed recurrent events occurring along with the time series. For instance, (\cite{Kawahara and Sugiyama}; \cite{Liu}) attempted to estimate the likelihood ratio using KL divergence; (\cite{Chen and Zhang}) proposed a graph-based approach and applied it in multivariate non-Euclidean data. (\cite{Zou}) developed an empirical likelihood approach to discover an unknown number of change-points via BIC. (\cite{Matteson and James}) present a U-statistic to quantify the difference between the characteristic functions of two segments. (\cite{Lung-Yut-Fong}) generalized Mann-Whitney rank-based statistic to multivariate settings. (\cite{Arlot}) improved the kernel-based method by (\cite{Harchaoui and Cappe}) with a generalized model-selection penalty. 

However, most of the existing nonparametric research focused on the single change-point problem and the extension of multiple change point detection is achieved via dynamic programming (\cite{Harchaoui and Cappe}; \cite{Lung-Yut-Fong}; \cite{Arlot}) or bisection procedure (\cite{Vostrikova}; \cite{Olshen and Venkatraman}; \cite{Matteson and James}). It is still scarce in the literature to efficiently discover multiple change points under multivariate settings, especially when the covariance structure changes in chronological order.

In the paper, a new nonparametric approach is proposed to detect multiple distributional changes. Our developments are anchored on independent time-ordered observations. The basic idea is to systematically select a subset of the data points at each iteration, with which we encode the continuous observations into a sequence of Bernoulli variables. The number of change points and their locations are estimated by aggregating all the dynamic information discovered from the collection of Bernoulli processes. Instead of working on the unknown distribution directly, the proposed approach takes advantage of dividing the problem into several easier tasks, so that the maximum likelihood approach can be applied to analyze the Bernoulli processes, respectively. We demonstrate that this divide-and-conquer framework is robust to any underlying distributions and can be implemented in conjunction with other parametric approaches.

Another important extension of the aggregation technique is the stability change-point detection. Such a stability selection introduced by (\cite{Meinshausen and Buhlmann}) was designed to improve the performance of variable selection and provide control for false discoveries. We demonstrate that the idea of aggregating results by applying a procedure to subsamples of the data can be well implemented under our framework. One can aggregate the estimation from the Bernoulli sequences, and select the estimated change-point locations with votes beyond a predetermined threshold. To our limited knowledge, this could be the first method in the change-point literature that holds both asymptotic property and finite-sample control of false discoveries.

The paper is organized as follows. We start with an efficient algorithm for searching multiple change points within a change-in-parameter Bernoulli sequence in Section~2. In Section~3, we propose the main divide-and-conquer framework to analyze multivariate observations. In Section~4, the stability detection technique is applied under our change point framework. In Section~5, a strategy is provided to weighting the results from different sample sets for practical usage. Numerical experiments are shown in Section~6 to compare with other well-known nonparametric approaches. Real data applications including categorical and continuous data in univariate and multivariate settings are reported in Section~7. We note that the proposed approach can be easily generalized to multivariate categorical or ordinal time series data, though we mainly focus on continuous data under the multivariate setting in this paper.

\section{Sequence of Bernoulli variables}

\subsection{Background}

Consider a sequence of 0-1 independent Bernoulli variables $\{E_t\}_{t=1}^N$. Suppose that $k$ change points are embedded within the sequence at locations $0 = \tau^*_0 < \tau^*_1 < ... < \tau^*_k < \tau^*_{k+1}=N$, so the observations are partitioned into $k+1$ segments. Observations within segments are identically distributed but observations between adjacent segments are not. Specially, $E_t \stackrel{iid}{\sim} Bern(p_i)$ for $E_t \in \{E_{\tau^*_{i}+1}, ...,E_{\tau^*_{i+1}}\}$, for $i=0,...,k$. Now, given the number of change points $k$, one task of change point detection is to estimate the $k$ locations. In the most general case, both number of change points and their locations need to be estimated. 

Change point analysis in a Bernoulli-variable sequence was well studied when $k=1$. (\cite{Hinkley and Hinkley}) provided asymptotic distributions of likelihood ratio statistics for testing the existence of a change point. (\cite{Pettitt}) introduced CUSUM statistics and showed its asymptotical equivalence to the maximum likelihood estimator. (\cite{Miller and Siegmund}) investigated maximally selected chi-square statistics for two-sample comparison in a form of 2$\times$2 table. Later on, (\cite{Halpern}) advocated a statistic based on the minimum value of Fisher's Exact Test. When $k>1$, (\cite{Fu and Curnow}) firstly attempted to search for optimal change points such that the likelihood function is maximized. However, it still lacks a computationally efficient algorithm especially when $k$ is large.

In this section, we present a new algorithm to address the problem of performing multiple change points detection within a Bernoulli-variable sequence. An exhaustive searching procedure is proposed but with relatively feasible time complexity. The idea is motivated by (HFS \cite{Hsieh}) which was designed to detect dynamics phase shifts from one episode to another in financial data. By tracking the recurrence of 1's in the time axis, observations are partitioned into disjoint segments with different emergence intensities in a fashion of dynamic programming. Thus, change points between adjacent segments are detected such that the likelihood or penalized-likelihood functions are maximized.

%Suppose that change point location between two adjacent time regions $S_1$ and $S_2$ is $\tau_1$ where $S_1=\{E_1,...,E_{\tau}\}$ and $S_2=\{E_{\tau},...,E_{T}\}$. Let $j_1$ be the last position of ``1'' in $S_1$ and $j_2$ be the first position of ``1'' in $S_2$. So, it makes no difference in our statistics to discern $\hat{S_1}$ and $\hat{S_2}$ so long as $\hat{\tau} \in [j_1, j_1+1, ..., j_2]$. Take a random value from $[j_1, j_1+1, ..., j_2]$ as an estimator of $\tau$. Moreover, denote $T_1=\tau-j_1$ and $T_2=j_2-\tau$, so $T_1 \sim Geom(p_1)$ and $T_2 \sim Geom(p_2)$. Thus, for any $\epsilon>0$, we can have
%\[
%P(|\hat{\tau}-\tau| \ge \epsilon) \le P(\max(T_1,T_2) \ge \epsilon) \le \frac{E[\max(T_1,T_2)]}{\epsilon}
%\]
%where $E[\max(T_1,T_2)]=\frac{1-p_1}{p_1}+\frac{1-p_2}{p_2}+\frac{p_1(1-p_1)(1-p_2)^2+(1-p_1)^2p_2(1-p_2)}{[1-(1-p_1)(1-p_2)]^2}$.

\subsection{Multiple Change Points Searching Algorithm}

For simplicity of computation, we only consider the situation in which change point locates at the emergence position of 1's. Suppose that the number of 1's in the $i$-th segment is $M_i$, so the total number of 1's is $M=\sum_{i=1}^{k+1} M_i$ and the total number of 0's is $N-M$. By further supposing that the recurrent time can be 0 if two 1's appear consecutively, and $R_1=0$ if $E_1=1$, and $R_{M+1}=0$ if $E_N=1$, the Bernoulli-variable sequence can be represent by a sequence of recurrent time between consecutive 1's, denoted as $\{R_t\}_{t=1}^{M+1}$. Especially, there are $M_i+1$ recurrent times in the $i$-th segment where $R_t \sim Geom(p_i)$. The task then becomes to search for the change points within the recurrent-time sequence.

The searching procedure is done by iteratively taking off the smallest number $R_{min}$ from the rest $R_t$'s and combine the time points within $R_{min}$. For example, if $R_{min}$ is the recurrent time between $j$ and $j'$, we combine the locations from $(j+1)$ to $j'$ as a time window, denoted as $w_{j\to j'}$. Here, we suppose that $E_j=1$, $E_{j'}=1$, and $E_t=0$ for $t \in (j,j')$. In the next step, if the smallest $R_t$ is taken from the recurrent time between $j'$ and $j''$, a new time window is recorded from $(j'+1)$ to $j''$, named $w_{(j'+1)\to j''}$. We can further combine the two consecutive time windows $w_{(j+1)\to j'}$ and $w_{(j'+1)\to j''}$ into $w_{(j+1)\to j''}$. Indeed, we iteratively merge a pair of nearest 1's at each step and update the recorded time windows according to their connectivity. The recorded time windows contain recurrent time with relatively smaller values, which corresponds to a period with high frequency of 1's. Hence, the boundaries of the time windows can be extracted as potential change point locations that partition the observations into segments with low and high Bernoulli parameters.

So far, the algorithm works very similarly to the hierarchical clustering with a single-linkage, by merging two closest single 1's or two groups from bottom to top. However, it is known that this greedy algorithm does not guarantee global optimization. Our remedy is to set a tuning parameter $C^*$ to control the minimal length of the recorded high-intensity segments. Additionally, we count the number of $R_t$ absorbed within each recorded time window. Continuing with the above example, the count of recurrent time for window $w_{(j+1)\to j'}$ and $w_{(j+1)\to j''}$ is denoted as $C_{(j+1)\to j'}=1$ and $C_{(j+1)\to j''}=2$, respectively. The recorded time window, for example, $w_{.\to ..}$ is regarded as a high-intensity segment only if its count $C_{.\to ..}$ is greater than the threshold $C^*$. Hierarchical clustering with a single-linkage is just a special case that $C^*=0$. Another most extreme case is when $C^*=M$, so there is no period having a count number above $C^*$, thus no change point exists. Without any prior knowledge about the minimal length of the segments, we run over all the choice of $C^*$ starting from $0$ to $M$ to generate all possible partitions. The optimum is returned to fit the Bernoulli or Geometric observations best. 

Suppose the observations are partitioned into $\Tilde{k}+1$ segments via $\Tilde{k}$ time window boundaries or change points $\Tilde{\tau}_1,...,\Tilde{\tau}_{\Tilde{k}}$. The Bernoulli parameter $\Tilde{p}_i$ between $\Tilde{\tau}_{i-1}$ and $\Tilde{\tau}_{i}$ can be estimated by MLE $\hat{\Tilde{p}}_i=\frac{\{ \#~\text{of}~1's~\in~(\Tilde{\tau}_{i-1},\Tilde{\tau}_i)\}}{\Tilde{\tau}_i-\Tilde{\tau}_{i-1}}$. To measure the goodness-of-fit, model selection is done by maximizing log-likelihood function within each segment, while penalizing the number of change points $k$ and related estimation parameters. The penalized function or loss can be written by,
\begin{equation} \label{eq:loss}
L(\Tilde{\tau}_1,...,\Tilde{\tau}_{\Tilde{k}}) = -2 \sum_{i=1}^{\Tilde{k}+1} \sum_{t \in (\Tilde{\tau}_{i-1},\Tilde{\tau}_i)}[{E_t}log\hat{\Tilde{p}}_i + {(1-E_t)}log(1-\hat{\Tilde{p}}_i)] + \phi(N)Q_{\Tilde{k}}
\end{equation}
where $Q_k$ is the total number parameters; $\phi(N)$ is the penalty coefficient; $\phi(N)=2$ for AIC and $\phi(N)=log(N)$ for BIC.

%Consider a recurrence time sequence $\{R_t\}_{t}$ created from a Bernoulli sequence $\{E_t\}_{t}$. 
Suppose that $W(.)$ is a mapping that records the corresponding time window of $R_t$. For example, $W(R_t)=w_{(j+1)\rightarrow j'}$ where $R_t$ is the recurrent time between $j$ and $j'$. It is marked that the segmentation and the loss function can be updated based on the results in the last step. After applying a big loop cycling through $C^*$ from $0$ to $M$, the total time complexity now becomes $O(M^2)$. As a result, an optimal window set is returned, so the change points locations are estimated by their boundaries. The multiple change points searching algorithm is described in \textbf{Algorithm~1}.

\noindent\rule{12.5cm}{0.8pt}\\
\textbf{Algorithm~1}\\
\rule{12.5cm}{0.4pt}\\
Input: unmarked recurrence time $\{R_t\}_{t}$ and a threshold $C^*$\\
Loop: cycle $R_t$ through order statistics $R_{(1)}$, $R_{(2)}$, ..., $R_{(M+1)}$\\
\indent 1. Initial an empty set $\mathbb{W}$ recording the high-intensity time windows \\
\indent 2. Consider 4 ``if'' conditions and obtain a new window $w$,\\
\indent \indent a. If neither $R_{t-1}$ or $R_{t+1}$ is marked:\\
\indent \indent \indent $w=W(R_{t})$\\
\indent \indent b. If $R_{t-1}$ is marked but $R_{t+1}$ is not:\\
\indent \indent \indent merge $W(R_{t-1})$ and $W(R_{t})$ into one window,\\
\indent \indent \indent $w= \{W(R_{t-1})\bigcup W(R_{t})\}$\\
\indent \indent c. If $R_{i-1}$ is not marked but $R_{t+1}$ is:\\
\indent \indent \indent merge $W(R_{t})$ and $W(R_{t+1})$ into one window,\\
\indent \indent \indent $w= \{W(R_{t})\bigcup W(R_{t+1})\}$\\
\indent \indent d. If both $R_{t-1}$ and $R_{t+1}$ are marked:\\
\indent \indent \indent merge $W(R_{t-1})$, $W(R_{t})$, and $W(R_{i+1})$ into one window,\\
\indent \indent \indent $w= \{W(R_{t-1})\bigcup W(R_{t})\bigcup W(R_{i+1})\}$\\
\indent 3. Update the recorded window set $\mathbb{W}$ with $w$ and mark $R_{t}$.\\
\indent 4. If window length $|w|$ is greater than $C^*$:\\
\indent \indent \indent extract the boundaries of windows in $\mathbb{W}$ as $\Tilde{\tau}_1,...,\Tilde{\tau}_{\Tilde{k}}$\\
\indent \indent \indent update loss function $L(\Tilde{\tau}_1,...,\Tilde{\tau}_{\Tilde{k}})$\\
Output: optima boundaries $\hat{\tau}_1,...,\hat{\tau}_{\hat{k}}$\\
\rule{12.5cm}{0.8pt}\\

\section{MCP for multivariate time series}

A large part of change point detection literature deals with continuous observation. In this section, we firstly proposed an encoding approach to categorize continuous time series into multiple Bernoulli sequences, and then analyze change points embedded within the multivariate process. The idea of categorizing real-value observations aims to extract more relevant information and filter out noise. It is claimed that the proposed approach is robust to encode any underlying distributions and is easily generalized to either categorical or continuous observations.

\subsection{Encoding continuous time series}

In the analysis of single stock returns, the authors in (\cite{Hsieh}) utilized a pair of thresholds to mark absolutely large stock returns as 1 and 0 otherwise, then revealed the volatility pattern behind the resultant 0-1 sequence. The encoding process is written as
\begin{equation}\label{eq:cut}
E_t = \begin{cases} 1 & \quad X_t\le \alpha, \,\, X_t\ge \beta\\
0 & \quad Otherwise \end{cases}
\end{equation}
where $\{E_t\}_t$ is an excursion sequence by marking the stock returns. Later, authors in (\cite{2}) proposed an encoding method to explore the local dependence of observations when $X_t \in \mathbb{R}^p$. Following up the idea, we partition $\mathbb{R}^p$ space into $V$ disjoint subarea, denoted as $B^{(v)}$ for $v=1,2,...,V$, and transform the continuous observations $\{X_t\}_{t=1}^{T}$ into $V$ Bernoulli sequences or
a $V$-dimensional multinominal process $\{(E_t^{(1)}, E_t^{(2)}, ..., E_t^{(V)})\}_{t=1}^{T}$, such that
\begin{equation}\label{eq:ball}
E_t^{(j)} = \begin{cases} 1 & \quad X_t\in B^{(j)}\\
0 & \quad Otherwise \end{cases}
\end{equation}
Here, subarea $B^{(j)}$ plays an important role to reserve the change-point pattern into a Bernoulli process. Denote the Bernoulli parameter in the $i$-th segments of $\{E_t^{(j)}\}$ as $p_{i}^{(j)}$. So,
\begin{equation} \label{eq:integ}
p_{i}^{(j)}=\int_{B^{(j)}} dF_{i}
\end{equation}
where $F_{i}$ corresponds to the CDF of $\{X_t\}_t$ in the $i$-th time segments. Consider two consecutive homogeneous time segments $i$ and $i+1$. The change point detection becomes easier if $p_{i}^{(j)}$ is far apart from $p_{i+1}^{(j)}$, and vice versa. There is actually a tradeoff between the size and the total number of the subareas. Larger number of subareas with smaller size can discover the distributional difference more precisely but with sacrifice of the power of statistics due to the reduced sample size. In the following subsections, we would assume that $V$ is fixed and $B^{(j)}$ are determined. The implementation of the encoding procedure is discussed in Section~5.

\subsection{Single Change Point Detection}

Starting with a simplest setting, let's assume that there exists a single change point at $\tau^*$. Specifically, $\{X_t\}_{t=1}^{\tau^{*}} \stackrel{iid}{\sim} F_1$ and $\{X_t\}_{t=\tau^{*}+1}^{N} \stackrel{iid}{\sim} F_2$ where $F_1$ and $F_2$ are two unknown CDFs. The goal is to test the homogeneity between the two sample sets. Following the encoding procedure above, we obtain a multinomial process $\{(E_t^{(1)}, E_t^{(2)}, ..., E_t^{(V)})^{'}\}_{t=1}^{N}$ where $\{E_t^{(j)}\}_{t=1}^{\tau^{*}} \sim Bern(p^{(j)}_{1,\tau^{*}})$ and $\{E_t^{(j)}\}_{t=\tau^{*}+1}^{N} \sim Bern(p^{(j)}_{2,\tau^{*}})$.

(\cite{Robbins}) extent the multivariate CUSUM statistics with uncorrelated components to the multinomial settings and derived its asymptotic distributions under the null hypothesis. The estimators of Bernoulli parameters at a hypothesized time location $\tau$ is defined by
\begin{equation} \label{eq:p1}
    \hat{p}^{(j)}_{1,\tau} = {\sum_{t=1}^{\tau} \mathbbm{1}\{E_t^{(j)}=1\}}/{\tau}
\end{equation}
and 
\begin{equation} \label{eq:p2}
    \hat{p}^{(j)}_{2,\tau} = {\sum_{t=\tau+1}^{N} \mathbbm{1}\{E_t^{(j)}=1\}}/{(N-\tau)}
\end{equation}
for $j=1,2,...,V$. Then, a chi-square statistic proposed by (Robbins et. al 2011) is written as,
\[
\chi_{\tau}^2 = \sum_{j=1}^{V} \frac{(\sum_{t=1}^{\tau}\mathbbm{1}\{E^{(j)}_t=1\}-\hat{p}^{(j)}_{1,\tau})^2}{\hat{p}^{(j)}_{1,\tau}} + \frac{(\sum_{t=\tau+1}^{N}\mathbbm{1}\{E^{(j)}_t=1\}-\hat{p}^{(j)}_{2,\tau})^2}{\hat{p}^{(j)}_{2,\tau}}
\]
Moreover, if there exists no change point under the null hypothesis, the maximally selected chi-square statistics $\chi_{\hat{\tau}}^2$ converges to a Brownian motion asymptotically.

%A two-sample multivariate homogeneity test is done with a test statistics $\mathbb{S}(\tau)$ which is defined in (\ref{eq:stat}). It is proved in (?) that under $H_0: F_1 = F_2$, the statistics $\mathbb{S}(\tau)$ follows approximately $\chi^2$ distribution with degree of freedom $V$.
%\begin{theorem}
%For a hypothesized change point $\tau$, consider hypothesis $H_0: F_1 = F_2$ v.s. $H_a: F_1 != F_2$. When $N$ goes into infinity in such a way that the ratio converges to a constant, i.e. $\tau/N \rightarrow \gamma$, under the null hypothesis, we have,
%\begin{equation} \label{eq:stat}
%    \mathbb{S}(\tau):=\frac{\tau(N-\tau)}{N}(\hat{p}_{1,\tau}-\hat{p}_{2,\tau})^{'}\Sigma^{-1}(\hat{p}_{1,\tau}-\hat{p}_{2,\tau}) \xrightarrow{d} \chi^2_{(V)}
%\end{equation}
%where $\Sigma=[\Sigma(i,j)]_{i,j}$ is a covariance matrix. $\Sigma(i,i)= \hat{p}^{(i)}_{\tau}(1-\hat{p}^{(j)}_{\tau})$; $\Sigma(i,j)= -\hat{p}^{(i)}_{\tau}\hat{p}^{(j)}_{\tau}$ if $i!=j$, where $\hat{p}^{(i)}_{\tau}=(\tau \hat{p}^{(i)}_{1,\tau} + (N-\tau) \hat{p}^{(i)}_{2,\tau} )/ N$.
%\end{theorem}
%In a slightly more complicated case, we further assume that $\tau$ is unknown but there exists at most one change point. A straightforward procedure based on \ref{eq:stat} is to go through all possible choice of $\tau=1,2,...,N-1$ and calculate the test statistics. An optimal change point candidate is the one that maximize the function $\mathbb{S}(\tau)$,
%\begin{equation}
%\hat{\tau}=\argmax_{\tau}~\mathbb{S}(\tau)
%\end{equation}
%Permutation test can be done to test the existence of a single change point.

\subsection{Multiple Change Points Detection}

Now, we consider multiple change point detection when number of change point $k$ is known. Suppose the change point locations are $0 = \tau^*_0 < \tau^*_1 < ... < \tau^*_k < \tau^*_{k+1} = N$. Specifically, $\{X_t\}_{t=\tau^*_i}^{\tau^*_{i+1}} \stackrel{iid}{\sim} F_i$ for $i=0,1,...,k$, and consecutive CDFs $F_i$ and $F_{i+1}$ are different. A naive method to search for $O(N^k)$ possible change point locations is computationally intractable. Bisection procedure as in (\cite{Vostrikova}; \cite{Olshen and Venkatraman}), dynamic programming (\cite{Harchaoui and Cappe}), or the one we proposed in Section~2 can work for the purpose. It is claimed that our searching algorithm is favorable in exploring the global optima, but it is designed only adapting to a single-dimensional Bernoulli-variable sequence.

A divide-and-concur approach is proposed as a remedy to the multivariate problem. Denote $\{E_t^{(j)}\}_{t=1}^{N}$ as the $j$-th Bernoulli process after encoding the observations via $B^{(j)}$, and $p^{(j)}_{i}$ as the true parameters of $E_t^{(j)}$ defined by (\ref{eq:integ}). We firstly apply Algorithm1 to estimate the change point locations within $\{E^{(j)}_t\}$, for $j=1,2,...,V$, respectively. Suppose the estimated change point locations in the $j$-th sequence is $0= \hat{\tau}^{(j)}_0 < \hat{\tau}^{(j)}_1 < \hat{\tau}^{(j)}_2< ... < \hat{\tau}^{(j)}_{\hat{k}^{(j)}} < \hat{\tau}^{(j)}_{\hat{k}^{(j)}+1} = N$. Note that the number of change points $\hat{k}^{(j)}$ does not necessarily equals $k$. It should depend on the way that we encode the observations and the choice of penalty coefficient in (\ref{eq:loss}). So, the observations are partitioned into $\hat{k}^{(j)}+1$ segments and within-segment points are sharing the same estimator of parameter. After that, a vector of length $N$ is generated to record the estimated Bernoulli parameter, denoted as $\{\hat{r}^{(j)}_t\}_{t=1}^N$. Let $\hat{p}^{(j)}_{i}$ be the estimated parameter when $t$ is between $\hat{\tau}^{(j)}_{i-1}$ and $\hat{\tau}^{(j)}_{i}$, so \[
\hat{p}^{(j)}_{i} = \frac{\sum_{t=\hat{\tau}^{(j)}_i+1}^{\hat{\tau}^{(j)}_{i+1}} \mathbbm{1}\{E_t^{(j)}=1\}}{\hat{\tau}^{(j)}_{i+1}-\hat{\tau}^{(j)}_{i}}
\]
for $i=0,1,...,\hat{k}^{(j)}$ and $j=1,2,...,V$. Thus, there are $\hat{\tau}^{(j)}_{i}-\hat{\tau}^{(j)}_{i-1}$ duplicates of $\hat{p}^{(j)}_{i}$ in $\{\hat{r}^{(j)}_t\}_{t=1}^N$, and $\hat{r}^{(j)}_t=\hat{p}^{(j)}_{i}$, for $t \in (\hat{\tau}^{(j)}_{i-1},\hat{\tau}^{(j)}_{i}]$. Repeating the above procedure through the $V$ sequences, we can eventually obtain a sequence of $V$-dimensional estimated parameters, denoted as $\{\hat{r}_t\}_t=\{(\hat{r}^{(1)}_t, \hat{r}^{(2)}_t,..., \hat{r}^{(V)}_t)^{'}\}_t$. 

Generated by marking samples from subarea $B^{(j)}$, the Bernoulli-variable sequence $E^{(j)}_t$ partially reserves the distributional changes from the raw observations. Indeed, the switching pattern of Bernoulli-parameter recorded in $\hat{r}^{(j)}_t$'s is relevant to the distributional changes, while some $\hat{r}^{(j)}_t$'s or at least some subsequences may work as irrelevant noise, especially, when $\int_{B^{(j)}} dF_{i} \cong \int_{B^{(j)}} dF_{i+1}$. An aggregation statistic is present to combine all pieces of information from $j=1,2,...,V$, and weight each $\{E_t^{(j)}\}_{t=1}^{N}$ according to its degree of relevance. In this section, we will treat every sequence equally for theoretical purpose. The weighting procedure will be described in Section~5. 

Different from the CUSUM statistics, we consider the within-group variance in $\{\hat{r}_t\}_t$. Given $k$ hypothesized change point locations $\tau_1,\tau_2,...,\tau_k$, the statistic is written as,
\begin{equation}\label{eq:G}
\hat{G}(\tau_1,\tau_2,...,\tau_k):= \sum_{i=0}^{k} \sum_{t=\tau_i+1}^{\tau_{i+1}} \frac{||\hat{r}_t - \bar{r}_{i}||^2}{\tau_{i+1}-\tau_i}
\end{equation}
where $\bar{r}_{i}={\sum_{t=\tau_i+1}^{\tau_{i+1}}\hat{r}_t}/{(\tau_{i+1}-\tau_i)}$ for $i=0,1,...,k$. Change point locations are then estimated as the ones that minimize the within-group variance, so
\begin{equation}\label{eq:multi}
\hat{\tau}_1,\hat{\tau}_2,...,\hat{\tau}_k = \argmin_{\tau_1,\tau_2,...,\tau_k} \hat{G}(\tau_1,\tau_2,...,\tau_k)
\end{equation}
It is shown in the next section that consistency holds for the statistic. Moreover, it is cheap in the computation when $k>1$. A hierarchical clustering algorithm with $k+1$ clusters obtained can be implemented to search for multiple change point locations.

Stack the estimated parameters $\{\hat{r}_t\}_{t=1}^{N}$ in a $N \times V$design matrix denoted as $\mathcal{M}$, in other words,
\[
\mathcal{M}_{N \times V}=[\mathcal{M}(t,j)]_{t,j}=[\hat{r}_t^{(j)}]_{t,j}~~ \textit{for}~t=1,...,N;~j=1,...,V
\]
A time-order-kept agglomerate hierarchical clustering algorithm is applied upon $\mathcal{M}$ to cluster time locations (rows) with comparable $V$-dimensional covariables. We modify the classical hierarchical clustering algorithm in the sense that only consecutive time points or groups are agglomerated at each iteration, so the original time order is kept. A Wald's type of linkage is applied for the purpose of minimizing the within-group variance. As a result, $k+1$ consecutive time point clusters get returned, so $k$ change point locations can be detected accordingly.

\subsection{Consistency}
We present the consistency of the estimated change point locations obtained from our proposed procedure. It shows that if some of the likelihood-based estimators of a single Bernoulli sequence are consistent, then the estimators derived by the aggregation statistic in (\ref{eq:G}) can also converge the true change point locations. We firstly demonstrate the consistency property in the case of a single change point and then do the same for the multiple change points setting.

Suppose the true change point location is $\tau^*$, so $\{E_t^{(j)}\}_{t=1}^{\tau^*} \sim Bern(p^{(j)}_{1,\tau^*})$ and $\{E_t^{(j)}\}_{t=\tau^{*}+1}^{N} \sim Bern(p^{(j)}_{2,\tau^*})$. By definition,
\begin{equation}\label{eq:r}
\hat{r}^{(j)}_t = \begin{cases} \hat{p}^{(j)}_{1,\hat{\tau}^{(j)}} & ,\, t \in [1,\hat{\tau}^{(j)}]\\
\hat{p}^{(j)}_{2,\hat{\tau}^{(j)}} & ,\, t \in (\hat{\tau}^{(j)},N] \end{cases}
\end{equation}
where $\hat{\tau}^{(j)}$ is the estimated change point locations in $\{E^{(j)}_t\}_t$. To prove the consistency, people typically assume that the size of the two half time sequence cut by $\tau^*$ goes into infinity as $N \rightarrow \infty$, and the proportion of the first half converges to a constant $\gamma^* \in (0,1)$, a.k.a. $\tau^*/N \rightarrow \gamma^* \, (N \rightarrow \infty)$. The within-group variance of (\ref{eq:G}) at any proportion cut $\gamma$ can be written as,
\begin{equation}
\hat{G}(\gamma)= \frac{\sum_{t=1}^{\floor{N\gamma}}||\hat{r}_t - \bar{r}_{1}||^2}{\floor{N\gamma}} + \frac{\sum_{t=\floor{N\gamma}+1}^{N}||\hat{r}_t - \bar{r}_{2}||^2}{N-\floor{N\gamma}}
\end{equation}
where $\bar{r}_{1}=\frac{\sum_{t=1}^{\floor{N\gamma}} \hat{r}_t}{\floor{N\gamma}}$ and $\bar{r}_{2}=\frac{\sum_{t=\floor{N\gamma}+1}^{N} \hat{r}_t}{N-\floor{N\gamma}}$. The estimated change point location now becomes
\begin{equation}
\hat{\tau} = \argmin_{\tau}~\hat{G}(\tau/N)
\end{equation}
in the finite-sample situation.

The theorem below shows that if some of the estimators $\hat{\tau}^{(j)}$ are consistent, then $\hat{\tau}$ consistently converges to $\tau^*$. Assume that if $p_{1,\tau^*}^{(j)} \neq p_{2,\tau^*}^{(j)}$, then $\hat{\tau}^{(j)}/N$ converges to $\gamma^*$ asymptotically; otherwise, $\hat{\tau}^{(j)}/N$ converges to $0$ or $1$ meaning no change point exists in $\{E_t^{(j)}\}_t$. We further assume that there exist at least one encoded Bernoulli sequence such that $p_{1,\tau^*}^{(j)} \neq p_{2,\tau^*}^{(j)}$.
Without loss of the generalization, we suppose that a change point exists in $\{E_t^{(j)}\}_t$ for $j=1,2,...,u$, and no change point exists for $j=(u+1),...,V$ where $1 \le u \le V$.

%Here, we can plug in any consistent estimator for $\{E_t^{(j)}\}_t$. A simple choice is,
%\begin{equation}\label{eq:}
%\hat{\tau}^{(j)} = \argmax_{\tau}~ (\hat{p}^{(j)}_{1,\tau}-\hat{p}^{(j)}_{2,\tau})^2
%\end{equation}
%It can be shown that if $p_{1,\tau_0}^{(j)} \neq p_{2,\tau_0}^{(j)}$, then $\hat{\tau}^{(j)}$ converges to $\tau_0$ in probability as $N$ goes to infinity.

%\begin{lemma} \label{lemma}
%Consider a Bernoulli sequence $\{E_t^{(j)}\}_t$ with a fixed $j$. Denote the true parameter in the two segments as $p_{1,\tau_0}^{(j)}=P(E_t^{(j)}=1)$, for $t=1,...,\tau_0$, and $p_{2,\tau_0}^{(j)}=P(E_t^{(j)}=1)$, for $t=(\tau_0+1),...,N$, respectively. For any $\epsilon>0$, if $p_{1,\tau_0}^{(j)} \neq p_{2,\tau_0}^{(j)}$, then
%\[
%P(|\hat{\tau}^{(j)}/N-\gamma_0| < \epsilon)=1
%\]
%as $N$ goes to infinity.
%\end{lemma}
%\begin{proof}
%For any $\gamma \in (0,1)$, it can be shown that
%\begin{equation}\label{eq:}
%\hat{H}^{(v)}(\gamma) \rightarrow (\frac{\gamma_0}{\gamma}\mathbbm{1}\{\gamma\ge\gamma_0\} + \frac{1-\gamma_0}{1-\gamma}\mathbbm{1}\{\gamma<\gamma_0\})^2 (p_1^{(v)} - p_2^{(v)})^2
%= h(\gamma_0,\gamma)(p_1^{(v)} - p_2^{(v)})^2
%\end{equation}
%as $N \rightarrow \infty$, by law of large number. Note that the maximum of $h(\gamma_0,\gamma)$ is attained when $\gamma_0=\gamma$.
%\end{proof}

\begin{theorem}
Under the assumption above and furthermore, for any $\epsilon>0$,
\[
P(|\hat{\tau}/N-\gamma^*| < \epsilon) \to 1
\]
\end{theorem}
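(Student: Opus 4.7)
The plan is to show that the aggregate objective $\hat G(\gamma)$ converges in probability, uniformly on each compact $[\delta,1-\delta]\subset(0,1)$, to a deterministic limit $G(\gamma)$ that is continuous on $[0,1]$ and attains its unique minimum at $\gamma^*$; a standard argmin-continuity argument then yields $\hat\tau/N\to\gamma^*$ in probability.

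The first step is to identify the limit of the aggregated vector $\hat r_t$. Define the idealised step vector $r_t^\infty\in[0,1]^V$ by $r_t^{\infty,(j)}=p_{1,\tau^*}^{(j)}$ for $t\le\tau^*$ and $r_t^{\infty,(j)}=p_{2,\tau^*}^{(j)}$ for $t>\tau^*$; for $j>u$ these two values coincide at a common $p^{(j)}$. For $j\le u$, the hypothesis $\hat\tau^{(j)}/N\to\gamma^*$ combined with the Bernoulli LLN on $[1,\tau^*]$ and $(\tau^*,N]$ gives $\hat p_{1,\hat\tau^{(j)}}^{(j)}\to p_{1,\tau^*}^{(j)}$ and $\hat p_{2,\hat\tau^{(j)}}^{(j)}\to p_{2,\tau^*}^{(j)}$ in probability, so $\hat r_t^{(j)}=r_t^{\infty,(j)}+o_P(1)$ for every $t$ outside the ``gap'' between $\min(\hat\tau^{(j)},\tau^*)$ and $\max(\hat\tau^{(j)},\tau^*)$, a set of size $o_P(N)$. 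For $j>u$, the block of size $N-o_P(N)$ yields an MLE converging to $p^{(j)}$, while the complementary block has size $o_P(N)$. Hence $\hat r_t=r_t^\infty+o_P(1)$ outside an exceptional set of size $o_P(N)$.

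Substituting this decomposition into $\hat G(\gamma)$ and using $\|\hat r_t\|\le\sqrt{V}$, the exceptional set contributes $o_P(1)$ uniformly on $[\delta,1-\delta]$ to both the sample means and the summed squared deviations. Writing $a:=(p_{1,\tau^*}^{(j)})_j$ and $b:=(p_{2,\tau^*}^{(j)})_j$, the idealised within-group variance (with approximately $N(\gamma^*-\gamma)$ copies of $a$ and $N(1-\gamma^*)$ copies of $b$ in the right-hand block when $\gamma\le\gamma^*$, and symmetrically when $\gamma\ge\gamma^*$) evaluates to
\[
G(\gamma)=\begin{cases}\dfrac{(\gamma^*-\gamma)(1-\gamma^*)}{(1-\gamma)^2}\,\|a-b\|^2,& \gamma\le\gamma^*,\\[4pt]\dfrac{\gamma^*(\gamma-\gamma^*)}{\gamma^2}\,\|a-b\|^2,& \gamma\ge\gamma^*.\end{cases}
\]

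Since at least one coordinate $j\le u$ satisfies $p_{1,\tau^*}^{(j)}\ne p_{2,\tau^*}^{(j)}$, we have $\|a-b\|>0$, so $G$ is continuous on $[0,1]$, vanishes only at $\gamma^*$, and satisfies $G(0)=G(1)=\gamma^*(1-\gamma^*)\|a-b\|^2>0$. Uniform convergence of $\hat G$ to $G$ on each $[\delta,1-\delta]$, together with $\inf\{G(\gamma):|\gamma-\gamma^*|\ge\epsilon\}>0$ and $\hat G(\gamma^*)\to 0$, forces the minimiser $\hat\tau/N$ into $(\gamma^*-\epsilon,\gamma^*+\epsilon)$ with probability tending to $1$; the strict positivity of $G$ at the endpoints prevents $\hat\tau/N$ from escaping to $\{0,1\}$. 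The principal technical obstacle is the coupling in the first step: joining the random-index MLE consistency $\hat p_{k,\hat\tau^{(j)}}^{(j)}\to p_{k,\tau^*}^{(j)}$ to the assumed consistency of $\hat\tau^{(j)}/N$, and promoting the pointwise convergence of $\hat G(\gamma)$ to uniform convergence. Both are naturally handled by a Glivenko--Cantelli-type argument applied to the partial-sum processes $\gamma\mapsto N^{-1}\sum_{t=1}^{\lfloor N\gamma\rfloor}E_t^{(j)}$, after which the limit computation and the argmin step are routine.
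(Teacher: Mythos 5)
Your proposal is correct and takes essentially the same route as the paper: both show that $\hat G(\gamma)$ converges uniformly to the same deterministic limit (your piecewise formula is exactly the paper's $g(\gamma^*,\gamma)\,\|p^u_{1,\tau^*}-p^u_{2,\tau^*}\|^2$ written out case by case), note that this limit vanishes only at $\gamma^*$, and conclude by the standard argmin-consistency inequality. The only difference is bookkeeping --- the paper uses the exact identity $\hat G(\gamma)=\sum_j g(\hat\gamma^{(j)},\gamma)(\hat p^{(j)}_{1,\hat\tau^{(j)}}-\hat p^{(j)}_{2,\hat\tau^{(j)}})^2$ while you approximate $\hat r_t$ by an idealised step vector and control the exceptional set; you are, if anything, more explicit than the paper about the uniform-convergence and boundary issues.
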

as $N \to \infty$.

\begin{proof}
Let $\hat{\gamma}^{(j)}=\hat{\tau}^{(j)}/N$. For any $\gamma \in (0,1)$, rewrite
\[
\hat{G}(\gamma)= \sum_{j=1}^V g(\hat{\gamma}^{(j)},\gamma) (\hat{p}_{1,\hat{\tau}^{(j)}}^{(j)}-\hat{p}_{2,\hat{\tau}^{(j)}}^{(j)})^2
\]
where
$$g(\hat{\gamma}^{(j)},\gamma)= \frac{\hat{\gamma}^{(j)}}{\gamma}(1-\frac{\hat{\gamma}^{(j)}}{\gamma})\mathbbm{1} \{\gamma\ge\hat{\gamma}^{(j)}\} + \frac{1-\hat{\gamma}^{(j)}}{1-\gamma}(1-\frac{1-\hat{\gamma}^{(j)}}{1-\gamma})\mathbbm{1} \{\gamma<\hat{\gamma}^{(j)}\}$$
For $j=1,...,u$, with the consistency of $\hat{\tau}^{(j)}$, we can have
\[
g(\hat{\gamma}^{(j)},\gamma) (\hat{p}_{1,\hat{\tau}^{(j)}}^{(j)}-\hat{p}_{2,\hat{\tau}^{(j)}}^{(j)})^2 \rightarrow g(\gamma^*,\gamma)(p_{1,\tau^*}^{(j)}-p_{2,\tau^*}^{(j)})^2
\]
While for $j=(u+1),...,V$, it shows
\[
g(\hat{\gamma}^{(j)},\gamma) (\hat{p}_{1,\hat{\tau}^{(j)}}^{(j)}-\hat{p}_{2,\hat{\tau}^{(j)}}^{(j)})^2 \rightarrow 0
\]
since $g(0,\gamma)=g(1,\gamma)=0$. Therefore,
\[
\hat{G}(\gamma) \rightarrow \sum_{j=1}^u g(\gamma^*,\gamma)(p_{1,\tau^*}^{(j)}-p_{2,\tau^*}^{(j)})^2 = g(\gamma^*,\gamma) ||p^u_{1,\tau^*}-p^u_{2,\tau^*}||^2= G(\gamma)
\]
as $N \rightarrow \infty$, uniformly in $\gamma$.
Let $\hat{\gamma}=\hat{\tau}/N$. It follows that
\[
\hat{G}(\hat{\gamma}) < \hat{G}(\gamma^*)
\]
Additionally, the minimum value of $g(\gamma^*,\gamma)$ is attained when $\gamma=\gamma^*$. For any $\epsilon>0$, there exists $\eta>0$, such that $G(\gamma)-G(\gamma^*)>\eta$, for all $\gamma$ with $|\gamma-\gamma^*|\ge\epsilon$. Therefore,
\begin{equation*}
\begin{aligned}
P(|\hat{\gamma}-\gamma^*|>\epsilon) &\le P(G(\hat{\gamma})-G(\gamma^*)>\eta) \\
& = P(G(\hat{\gamma})-\hat{G}(\hat{\gamma})+\hat{G}(\hat{\gamma})-G(\gamma^*)>\eta)\\
&\le P(G(\hat{\gamma})-\hat{G}(\hat{\gamma})+\hat{G}(\gamma^*)-G(\gamma^*)>\eta)\\
&\le P(|G(\hat{\gamma})-\hat{G}(\hat{\gamma})| > \eta/2) + P(|\hat{G}(\gamma^*)-G(\gamma^*)| > \eta/2) \to 0
\end{aligned}
\end{equation*}
as $N$ goes into infinity.
%\[
%P(\lim_{N\to\infty} |\hat{\gamma}-\gamma_0|>\epsilon) \le P(\lim_{N\to\infty}G(\hat{\gamma})-G(\gamma_0)>\eta)=0
%\]
\end{proof}

The assumption ensures that $\hat{\tau}^{(j)}$ is a consistent estimator if a change point exists in $\{E^{(j)}_t\}_t$. So long as $u \ge 1$, the distributional discrepancy is captured by $\hat{\tau}$. For a Bernoulli-variable sequence, the change point analysis is relatively easier. One can test the existence of a single change point and plug in a consistent estimator if reject.

In the more general case of multiple change points, suppose that the observations are independent and distributed from $k+1$ distributions $\{F_i\}_{i=0}^k$. Let $\tau^*_i/N \to \gamma^*_i$ as $N\to \infty$, and $0=\gamma^*_0<\gamma^*_1<...<\gamma^*_{k}<\gamma^*_{k+1}=1$. Since $\{E^{(j)}_t\}_t$ may only reserve partial information of the distributional discrepancy, the number of change points in $\{E^{(j)}_t\}_t$ could be smaller than $k$ and varies for different $j$. By further assuming the existence of consistent estimator in the Bernoulli-variable sequence, the theorem below shows the consistency of the aggregation statistic when the number of change point $k>1$.

\begin{theorem}
Define that $C_i = \{j : \hat{\tau}^{(j)}_i/N \to \gamma_i^{*}~as~N\to \infty \}$. Suppose that $|C_i| \ge 1$ and $\hat{\tau}^{(j)}_i$ is none if $j \in \{1,...,V\}/C_i$. Further assume that $\zeta_i + \zeta_{i+1}< \tau^*_{i+1} - \tau^*_i$ where $\zeta_i = \max_{j \in C_i} |\hat{\tau}^{(j)}_i-\tau^{*}_i|$, for $i=1,...,k$. Then, for any $\epsilon>0$,
\[
P(\max_{i=1,...,k} |\hat{\tau}_i/N-\gamma^{*}_i| < \epsilon)\rightarrow 1
\]
as $N \to \infty$.
\end{theorem}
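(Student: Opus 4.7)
The plan is to mirror the single-change-point argument of Theorem 1, promoting the scalar $\gamma$ to a vector $(\gamma_1,\ldots,\gamma_k)$. First I would decompose the within-group variance as $\hat{G}(\gamma_1,\ldots,\gamma_k) = \sum_{j=1}^V \hat{G}^{(j)}(\gamma_1,\ldots,\gamma_k)$, where each coordinatewise contribution $\hat{G}^{(j)}$ is the within-group variance of the piecewise-constant function $\hat{r}^{(j)}_t$ relative to the hypothesized partition. Within each hypothesized block $(\gamma_i,\gamma_{i+1}]$, $\hat{r}^{(j)}_t$ is piecewise constant with jumps at the $\hat{\tau}^{(j)}_{i'}$'s that fall inside; the intra-block variance can then be expanded as a weighted sum of squared gaps $(\hat{p}^{(j)}_{i'}-\hat{p}^{(j)}_{i'+1})^2$ multiplied by positional weights, completely analogous to the $g(\hat{\gamma}^{(j)},\gamma)$ factor that appeared in the scalar proof.

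Next I would take $N\to\infty$ termwise. Sequences $j\notin\bigcup_i C_i$ contribute zero in the limit, exactly as in Theorem 1, because the corresponding estimated proportions $\hat{\gamma}^{(j)}_\cdot$ tend to $0$ or $1$ and the positional weights vanish at the endpoints. For $j\in C_i$, the consistency $\hat{\gamma}^{(j)}_i\to\gamma^*_i$ together with the separation hypothesis $\zeta_i+\zeta_{i+1}<\tau^*_{i+1}-\tau^*_i$ guarantees that every estimated change point eventually lies inside the unique hypothesized block that contains its true limit $\gamma^*_i$, so the termwise limit yields a deterministic $G(\gamma_1,\ldots,\gamma_k)$: the within-block variance, computed on the hypothesized partition $(\gamma_0,\gamma_1,\ldots,\gamma_{k+1})$, of the true step function that takes value $p^{(j)}_i$ on $(\gamma^*_{i-1},\gamma^*_i]$. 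Uniform convergence on compact subsets of $\{0<\gamma_1<\ldots<\gamma_k<1\}$ follows from the equicontinuity of the positional weights. I would then verify that $G$ is uniquely minimized at $(\gamma^*_1,\ldots,\gamma^*_k)$: at the true partition the within-block variance of a step function vanishes, whereas for any $\gamma$ with $\max_i|\gamma_i-\gamma^*_i|\ge\epsilon$ at least one hypothesized block must contain portions of two adjacent true segments, and the assumption $|C_i|\ge 1$ delivers a coordinate $j$ on which those segments carry distinct parameters, producing a strictly positive contribution. Compactness then yields $\eta>0$ with $G(\gamma)-G(\gamma^*)>\eta$ on $\{\gamma:\max_i|\gamma_i-\gamma^*_i|\ge\epsilon\}$, and the final chain of inequalities from Theorem 1 transfers verbatim to give $P(\max_i|\hat{\gamma}_i-\gamma^*_i|>\epsilon)\to 0$.

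The main obstacle will be the combinatorial bookkeeping in the decomposition and termwise limit: the change points $\hat{\tau}^{(j)}_i$ from different coordinates $j$ need not be in one-to-one correspondence with the hypothesized $\tau_i$'s and may be interleaved in complicated ways, so it takes care to write $\hat{G}^{(j)}$ in a form whose limit is unambiguously the within-block variance of the true step function. The separation condition $\zeta_i+\zeta_{i+1}<\tau^*_{i+1}-\tau^*_i$ is precisely what rules out the pathological interleavings in which a single hypothesized block straddles two distinct true change points, and this is where that hypothesis enters essentially; once it is used to reduce the limit to the clean expression for $G$, the rest of the argument is a direct multivariate analogue of Theorem 1.
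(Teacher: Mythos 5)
Your proposal is a legitimate route to the result, but it is not the one the paper takes, so a comparison is in order. You generalize Theorem~1 wholesale: decompose $\hat{G}=\sum_j \hat{G}^{(j)}$, establish uniform convergence of $\hat{G}$ to a deterministic limit $G$ over the whole simplex of hypothesized partitions, verify that $G$ has a well-separated minimum at $(\gamma^*_1,\ldots,\gamma^*_k)$ via the pigeonhole/straddling argument, and finish with the standard M-estimation chain of inequalities. The paper instead avoids uniform convergence over the $k$-dimensional parameter space entirely: it evaluates $\hat{G}$ at the single explicit candidate $\Tilde{\tau}_i=\tau^*_i+\zeta_i$, uses the separation hypothesis $\zeta_i+\zeta_{i+1}<\tau^*_{i+1}-\tau^*_i$ to guarantee that each block $(\Tilde{\tau}_i,\Tilde{\tau}_{i+1}]$ isolates exactly the cluster of estimated change points $\{\hat{\tau}^{(j)}_{i+1}: j\in C_{i+1}\}$, and thereby bounds $\hat{G}(\Tilde{\tau}_1,\ldots,\Tilde{\tau}_k)$ by a quantity of order $\max_i\zeta_i/N\to 0$; since the minimizer does at least as well, it cannot land in the region where the objective is bounded away from zero. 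So the two proofs deploy the separation condition differently --- you use it to rule out pathological interleavings in the termwise limit, the paper uses it to certify that the shifted truth is a valid, asymptotically optimal candidate. Your approach is more systematic and makes the identifiability of the minimizer explicit (something the paper leaves implicit in its final step $P(G(\hat{\tau}_1,\ldots,\hat{\tau}_k)\le 0)=0$), but it carries the extra burden of justifying uniform convergence in $k$ dimensions, including near degenerate partitions with vanishing block lengths, which is precisely the technical work the paper's explicit-candidate trick sidesteps. Neither argument is fully rigorous as written; your main outstanding obligations are the equicontinuity claim and confining $\hat{\gamma}$ to a compact subset where the uniform limit holds.
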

\begin{proof}
Denote $\tau_i^*=N\gamma_i^*$. Consider a group of change point locations that $\Tilde{\tau}_i=\tau_i^*+\zeta_i$, for $i=1,2,...,k$. By the definition of $\zeta_i$, it follows that
\begin{equation*}
\begin{aligned}
\hat{G}(\Tilde{\tau}_1,...,\Tilde{\tau}_k) 
&\le \sum_{i=0}^{k-1} \sum_{j\in C_{i+1}}\frac{2|\hat{\tau}^{(j)}_i-\tau^{*}_i|}{\tau^{*}_{i+1}-\tau^{*}_i} (1-\frac{2|\hat{\tau}^{(j)}_i-\tau^{*}_i|}{\tau^{*}_{i+1}-\tau^{*}_i})(\hat{p}^{(j)}_{i+1}-\hat{p}^{(j)}_{i})^2\\
&\le \sum_{i=0}^{k-1} |C_{i+1}| \frac{2\zeta_i/N}{\gamma^{*}_{i+1}-\gamma^{*}_i} (1-\frac{2\zeta_i/N}{\gamma^{*}_{i+1}-\gamma^{*}_i})
\end{aligned}
\end{equation*}
Then, denote $\Theta= \{(\tau_1,...,\tau_k) : \max_{i=1,...,k} |\tau_i/N-\gamma^{*}_i| \le \epsilon\}$. It shows that, for any $\epsilon>0$,
\[
P(\max_{i=1,...,k} |\hat{\tau}_i/N-\gamma^*_i| \ge \epsilon) \le P(G(\hat{\tau}_1,...,\hat{\tau}_k) \le \min_{(\tau_1,...,\tau_k) \in \Theta} G(\tau_1,...,\tau_k)) \label{eq:star} \tag{*}
\]
Moreover, since $\zeta_i$ is consistent to $0$, uniformly in $i$, by the assumption. So, 
$$P(\max_{i=1,...,k}\zeta_i > \epsilon) \rightarrow 0$$
Therefore,
\begin{equation*}
\begin{aligned}
\eqref{eq:star} \quad
& \le P(G(\hat{\tau}_1,...,\hat{\tau}_k) \le \min_{(\tau_1,...,\tau_k) \in \Theta} G(\tau_1,...,\tau_k) | \max_{i}\zeta_i < \epsilon)P(\max_{i}\zeta_i < \epsilon) \\
&+ P(G(\hat{\tau}_1,...,\hat{\tau}_k) \le \min_{(\tau_1,...,\tau_k) \in \Theta} G(\tau_1,...,\tau_k) | \max_{i}\zeta_i \ge \epsilon)P(\max_{i}\zeta_i \ge \epsilon) \\
&\le P(G(\hat{\tau}_1,...,\hat{\tau}_k) \le G(\Tilde{\tau}_1,...,\Tilde{\tau}_k)) + P(\max_{i}\zeta_i \ge \epsilon) \\
&\rightarrow P(G(\hat{\tau}_1,...,\hat{\tau}_k) \le 0) + 0 = 0
\end{aligned}
\end{equation*}
as $N$ goes into infinity.
\end{proof}

The theorem requires that the estimator is consistent if it exists, and there exists at least one estimator over the $V$ Bernoulli sequences according to a true change point. Though the assumption is strong theoretically, we actually transform the change point detection for unknown underlying distributions into an analysis of a Bernoulli-variable sequence. The task becomes easier since an explicit likelihood function exists without further assumption of the distribution family. So, parametric approaches is involved and fitted under the framework. In practice, the searching algorithm advocated in Section~2 can be employed to detect the change points for each Bernoulli process. Another advantages by applying the encoding-and-aggregation is that the error rate of change point detection can get controlled at the same time, which is present in the next section.

\section{Stability Change Point Analysis}

Finally, it comes to the most general case that the number of change points and their locations are unknown. The current approaches can be divided into two types: model selection and multi-stage testing. A searching algorithm is usually applied in conjunction with a model selection procedure to explore a possible number of change points starting from 1 to a large number. Multi-stage testing is conducted to test the null hypothesis of no additional change point needed by inserting another change point at each stage. However, none of the approaches provides a control for the discovery error of the change point detection. Indeed, the result is sensitive to the objective function of model selection or the significance level in multi-stage testing.

\subsection{The Stability Detection Method}

In this section, we borrow the idea of stability variable selection and propose a robust change point detection framework, named stability detection. Stability selection was firstly advocated by (\cite{Meinshausen and Buhlmann}) to enhance the robustness and control the false discovery rate of variable selection. Half of the samples are randomly selected to feed into a base model at each iteration. The relevant variables are ultimately discovered based on the votes aggregated over all the variable selection results. Later on, (\cite{Beinrucker}) extent the stability selection by sampling disjoint subsets of samples. 

Similar to the strategy of subsampling, we select but not randomly a subset of samples in $B^{(j)}$ to generate a Bernoulli sequence, and then estimate the number and locations of change points within each of the Bernoulli sequences, respectively. By treating each time location as a variable, the stability selection framework can be employed here to aggregate the estimated change points over $B^{(j)}$ for $j=1,2,...,V$. The successive change points are the ones with votes or selected probability above a pre-determined threshold. However, it could be unrealistic to break down the chronological order and treat each time point as separate from others. The locations near the true change points are considered as acceptable results.

Denote that $\mathcal{S}^{(j)}$ is a set of change points detected based on Bernoulli sequence $\{E^{(j)}_t\}_t$, and $p^{(j)}(t)$ is the probability that a time point $t$ is selected, i.e. $p^{(j)}(t)=P(t \in \mathcal{S}^{(j)})$. After aggregating all the change points sets $\mathcal{S}^{(j)}$ for $j=1,2,...,V$, the probability of selection for time point $t$ is defined by
\begin{equation}
\Pi^V(t)=\frac{\sum_{j=1}^V \mathbbm{1}\{t \in \mathcal{S}^{(j)}\}}{V}
\end{equation}
Then, we can obtain the output of stability change point detection by thresholding the quantity with a threshold $\pi \in (0,1)$,
\begin{equation}
\mathcal{S}^{V}_{\pi}=\{t: \Pi^V(t) \ge \pi\}
\end{equation}

\subsection{Error Control}

To evaluate the false discovery rate, we need to define the noisy time points that we should exclude from the admissive set. Especially, we believe that time points around the truth change point $\tau^*$ are admissive, but time points far away from $\tau^*$ should get excluded. 
Define $\mathcal{A}=\{t: t\in (\tau_i^* -w_\mathcal{A},\tau_i^* +w_\mathcal{A})~i=1,2,...\}$ as a set of admissive change points including true change points and their close neighbors. Here, $w_\mathcal{A}$ is an admissive window width and it can change over $i$. Similarly, define $\mathcal{N}=\{t: t \notin (\tau_i^* -w_\mathcal{N},\tau_i^* +w_\mathcal{N})~i=1,2,...\}$ as a set of noisy time points which is outside from the neighbors of the true change points where $w_\mathcal{N}$ is a noisy window width. Note the window width $w_\mathcal{A}$ can be narrower than $w_\mathcal{N}$, such that $\mathcal{A} \subset \mathcal{N}^C$. Suppose that the following assumptions hold for appropriate $w_\mathcal{A}$ and $w_\mathcal{N}$. 
\begin{enumerate}
    \item $\sum_{j=1}^V p^{(j)}(t)/V$ are identical for any $t \in \mathcal{N}$,
    \item $\sum_{j=1}^V p^{(j)}(t)/V$ are identical for any $t \in \mathcal{A}$.
\end{enumerate}
Here, we assume that the noisy time points have the same expected probability to be selected, and so do the admissive time points. Under these assumptions, the next theorem is shown to bound the expectation of false positive rate or false negative rate of change point detection, depending on the choice of threshold $\pi$. 

\begin{theorem}
Under the assumption (1) and (2), denote $p^{V}_{\mathcal{N}} = \sum_{j=1}^V p^{(j)}(t)/V$ for $t \in \mathcal{N}$ and $p^{V}_{\mathcal{A}} = \sum_{j=1}^V p^{(j)}(t)/V$ for $t \in \mathcal{A}$. Let $\pi \in (0,1)$ be the selection threshold.

For any $0 < \xi < 1/p^{V}_{\mathcal{N}} -1$, if $\pi > (1+\xi)p^{V}_{\mathcal{N}}$ we have
\begin{equation}\label{ineq1}
\frac{E[|\mathcal{S}^{V}_{\pi} \cap \mathcal{N}|]}{|\mathcal{N}|} \le \frac{1-(1+\xi)p^{V}_{\mathcal{N}}}{\pi-(1+\xi)p^{V}_{\mathcal{N}}}~exp(- \frac{\xi^2 V}{\xi+2} p^{V}_{\mathcal{N}})
\end{equation}

For any $0 < \xi < 1$, if $\pi < (1-\xi)p^{V}_{\mathcal{A}}$ we have
\begin{equation}\label{ineq2}
\frac{E[|(\mathcal{S}^{V}_{\pi})^C \cap \mathcal{A}|]}{|\mathcal{A}|} \le \frac{(1-\xi)p^{V}_{\mathcal{A}}}{(1-\xi)p^{V}_{\mathcal{A}}-\pi}~exp(- \frac{\xi^2 V}{\xi+2} p^{V}_{\mathcal{A}})
\end{equation}
\end{theorem}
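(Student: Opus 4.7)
The plan is to reduce both inequalities to concentration bounds for the sum $V\Pi^V(t)=\sum_{j=1}^V \mathbbm{1}\{t\in\mathcal{S}^{(j)}\}$, which, under the framework of Section~4.1, is a sum of $V$ independent Bernoulli indicators (independence across $j$ follows because the Bernoulli sequences $\{E^{(j)}_t\}_t$ are constructed from the disjoint encodings $B^{(j)}$, so the selection events $\{t\in \mathcal{S}^{(j)}\}$ are determined by independent inputs). By Assumption~(1), for any $t\in\mathcal{N}$, $E[V\Pi^V(t)]=V p^V_{\mathcal{N}}$; by Assumption~(2), for any $t\in\mathcal{A}$, $E[V\Pi^V(t)]=V p^V_{\mathcal{A}}$. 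Applying the multiplicative Chernoff (Bernstein) bound to these sums yields
\[
P\bigl(\Pi^V(t)\ge (1+\xi)p^V_{\mathcal{N}}\bigr)\le \exp\!\Bigl(-\tfrac{\xi^2 V}{\xi+2}p^V_{\mathcal{N}}\Bigr),\qquad
P\bigl(\Pi^V(t)\le (1-\xi)p^V_{\mathcal{A}}\bigr)\le \exp\!\Bigl(-\tfrac{\xi^2 V}{\xi+2}p^V_{\mathcal{A}}\Bigr),
\]
the first being the standard one-sided Chernoff tail and the second coming from the same moment-generating function argument with the signs reversed.

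To pass from these tail bounds to the statement of the theorem, I would use the standard stability-selection truncation trick. For~(\ref{ineq1}), fix $t\in\mathcal{N}$ and define $Y_t=(\Pi^V(t)-(1+\xi)p^V_{\mathcal{N}})_+$. Since $0\le \Pi^V(t)\le 1$, we have $Y_t\le 1-(1+\xi)p^V_{\mathcal{N}}$, and this upper bound is attained only on the event $\{\Pi^V(t)\ge (1+\xi)p^V_{\mathcal{N}}\}$. Hence
\[
E[Y_t]\le \bigl(1-(1+\xi)p^V_{\mathcal{N}}\bigr)\,P\bigl(\Pi^V(t)\ge (1+\xi)p^V_{\mathcal{N}}\bigr).
\]
On the other hand, whenever $\Pi^V(t)\ge \pi$ with $\pi>(1+\xi)p^V_{\mathcal{N}}$, we have $Y_t\ge \pi-(1+\xi)p^V_{\mathcal{N}}$, so
\[
E[Y_t]\ge \bigl(\pi-(1+\xi)p^V_{\mathcal{N}}\bigr)\,P\bigl(\Pi^V(t)\ge \pi\bigr).
\]
Combining these two inequalities with the Chernoff bound gives exactly the per-point bound
\[
P(t\in \mathcal{S}^V_\pi)\le \frac{1-(1+\xi)p^V_{\mathcal{N}}}{\pi-(1+\xi)p^V_{\mathcal{N}}}\exp\!\Bigl(-\tfrac{\xi^2 V}{\xi+2}p^V_{\mathcal{N}}\Bigr).
\]
Summing over $t\in\mathcal{N}$ and dividing by $|\mathcal{N}|$ gives~(\ref{ineq1}).

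For~(\ref{ineq2}) I would run the mirror argument with $Z_t=((1-\xi)p^V_{\mathcal{A}}-\Pi^V(t))_+$ for $t\in\mathcal{A}$. One checks that $Z_t\le (1-\xi)p^V_{\mathcal{A}}$ and this upper value is achieved only on $\{\Pi^V(t)\le (1-\xi)p^V_{\mathcal{A}}\}$, while $Z_t\ge (1-\xi)p^V_{\mathcal{A}}-\pi$ on $\{\Pi^V(t)\le \pi\}$ when $\pi<(1-\xi)p^V_{\mathcal{A}}$. The lower-tail Chernoff bound then yields the per-point bound, and summing over $t\in\mathcal{A}$ and dividing by $|\mathcal{A}|$ produces~(\ref{ineq2}).

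The main obstacle here is less the algebra than the two modelling points: (i) justifying that the indicators $\mathbbm{1}\{t\in\mathcal{S}^{(j)}\}$ are actually independent across $j$ so that the Chernoff bound is legitimate (the encodings $B^{(j)}$ are disjoint but the raw samples $X_t$ are shared, so independence must be argued from the construction of the subsampling in Section~4.1 rather than assumed); and (ii) justifying that the single exponential form $\exp(-\xi^2 V p/(\xi+2))$ is simultaneously valid as both an upper-tail and lower-tail Bernstein bound for a sum of independent but non-identically-distributed Bernoullis with mean $Vp$. The usual sharp lower-tail constant is $\xi^2/2$, so the unified $\xi^2/(\xi+2)$ form amounts to using the looser Bernstein bound in both directions; one should verify that the moment-generating-function calculation indeed goes through in this non-i.i.d.\ setting, since this is the only nontrivial input to the proof.
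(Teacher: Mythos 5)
Your proposal matches the paper's proof in all essentials: the per-point bound via the truncation/Markov trick (your positive-part variable $Y_t$ is just a rephrasing of the paper's inequality $\Pi^V(t)\le(1-\pi_{\mathcal{N}})\mathbbm{1}\{\Pi^V(t)\ge\pi_{\mathcal{N}}\}+\pi_{\mathcal{N}}$ followed by Markov), the Chernoff bound on the sum of indicators made independent by the disjoint-subsample construction, and averaging over $\mathcal{N}$ (resp.\ $\mathcal{A}$) using Assumptions (1)--(2). The paper likewise dispatches the second inequality as ``similar via the lower Chernoff bound,'' so your mirror argument and your caveats about the independence justification and the validity of the $\xi^2/(\xi+2)$ constant for the lower tail are, if anything, more explicit than the published proof.
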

\begin{proof}
For any $0 < \xi < V/\sum_{j=1}^V p^{(j)}(t) -1$, denote $\pi_{\mathcal{N}} = (1+\xi)\sum_{j=1}^V p^{(j)}(t)/V$, so that $\pi_{\mathcal{N}} \in (0,1)$. \\
It is easy to show that $\Pi^V(t) \le (1-\pi_{\mathcal{N}})\mathbbm{1}\{\Pi^V(t) \ge \pi_{\mathcal{N}}\} + \pi_{\mathcal{N}}$ for a fix $t \in \{1,2,...,N\}$. Thus,
\begin{equation*}
\begin{aligned}
    P(\Pi^V(t) \ge \pi) &\le P((1-\pi_{\mathcal{N}})\mathbbm{1}\{\Pi^V(t) \ge \pi_{\mathcal{N}}\} + \pi_{\mathcal{N}} \ge \pi)\\
    & = P(\mathbbm{1}\{\Pi^V(t) \ge \pi_{\mathcal{N}}\} \ge \frac{\pi - \pi_{\mathcal{N}}}{1 - \pi_{\mathcal{N}}})\\
    & \le \frac{1 - \pi_{\mathcal{N}}}{\pi - \pi_{\mathcal{N}}} P(\Pi^V(t) \ge \pi_{\mathcal{N}})\\
    & = \frac{1 - \pi_{\mathcal{N}}}{\pi - \pi_{\mathcal{N}}} P(\sum_{j=1}^V \mathbbm{1}\{t \in \mathcal{S}^{(j)}\} \ge (1+\xi)\sum_{j=1}^V p^{(j)}(t))
\end{aligned}
\end{equation*}
The last inequality holds based on Markov's inequality and the condition that $\pi>\pi_{\mathcal{N}}$.\\
Moreover, $\mathbbm{1}\{t \in \mathcal{S}^{(j)}\}$ are independent for $j=1,2,...,V$. It holds because that we select disjoint samples to make up $\{E^{(j)}_t\}_t$ so for a fixed time $t$, its selection does not reply on the iteration index $j$. The resultant probability can be further bounded via Chernoff upper bound,
\begin{equation*}
\begin{aligned}
    P(\sum_{j=1}^V \mathbbm{1}\{t \in \mathcal{S}^{(j)}\} \ge (1+\xi)\sum_{j=1}^V p^{(j)}(t)) \le exp(-\frac{\xi^2}{\xi+2} \sum_{j=1}^V p^{(j)}(t))
\end{aligned}
\end{equation*}
Hence,
\begin{equation*}
\begin{aligned}
\frac{E[|\mathcal{S}^{V}_{\pi} \cap \mathcal{N}|]}{|\mathcal{N}|} &= \frac{\sum_{t \in \mathcal{N}} P(\Pi^V(t) \ge \pi)}{|\mathcal{N}|}\\
& \le \sum_{t \in \mathcal{N}} \frac{1 - \pi_{\mathcal{N}}}{\pi - \pi_{\mathcal{N}}}~exp(-\frac{\xi^2}{\xi+2} \sum_{j=1}^V p^{(j)}(t))~/~|\mathcal{N}|
\end{aligned}
\end{equation*}
By further assuming identical $\sum_{j=1}^V p^{(j)}(t)$ for $t \in \mathcal{N}$, we can cancel $\mathcal{N}$ for both numerator and denominator, so the inequality (\ref{ineq1}) is obtained. Inequality (\ref{ineq2}) can be proved similarly via the lower bound of Chernoff's.
\end{proof}

As the bound of false positive rate or false negative rate decays with $V$, we are tempting to choose the number of iterations as large as possible. But it will significantly harm the power of change point detection due to the reductive sample size of the recurrent times. In order to control the false discovery rate from both sides, one should increase the signal-selection rate $p^{V}_{\mathcal{A}}$ and decrease the noise-selection rate $p^{V}_{\mathcal{N}}$. It is ideal to set threshold in between, $(1+\xi)p^{V}_{\mathcal{N}} < \pi < (1-\xi)p^{V}_{\mathcal{A}}$. Recall the definition of the selection set $\mathcal{S}^{(j)}:=\{\hat{\tau}^{(j)}_i,~i=1,...,\hat{k}^{(j)}\}$ where $\hat{\tau}^{(j)}_i$ is the $i$-th estimator of the $j$-th Bernoulli sequence. Then, $p^{V}_{\mathcal{A}}$ can be simplified by $\sum_{j=1}^V P(\hat{\tau}^{(j)}_i \in (\tau_i^* -w_\mathcal{A},\tau_i^* +w_\mathcal{A}))/V$. Thus, with a fixed width of $w_\mathcal{A}$, a good estimator $\hat{\tau}^{(j)}_i$ is favored so that it is close the true change point location with a higher probability. 

Another way to increase $p^{V}_{\mathcal{A}}$ is to sightly expand the selection set $\mathcal{S}^{(j)}$, that is to say, selecting the estimators and their neighbors. So, $\mathcal{S}^{(j)}=\{t: t \in neig(\hat{\tau}^{(j)}_i),~i=1,...,\hat{k}^{(j)}\}$. A wider neighbor set $neig()$ is better in adsorbing admissive change points but endures the risk of involving noise. In the change point analysis of a sequence of Bernoulli variables, it is illustrated in Section~2 that a change point is estimated at the locations of 1's. A conservative way of expanding the selection set is to involve the estimator and the locations between the last and the next 1's.

\section{Subsampling and Weighting Strategy}

From an application perspective, there are still two real problems to be addressed. Firstly, how to generate a series of subarea $\{B^{(j)}\}_{j=1,...,V}$ in the encoding phase. Secondly, how to weight the contribution for each encoded Bernoulli sequence $\{E_t^{(j)}\}_t$ based on its degree of relevance. A follow-up question is that how to measure the goodness-of-fit for each $\{E_t^{(j)}\}_t$ and weighting their contributions accordingly. In this section, we resolve both problems via a subsampling weighting technique.

To address the first one, a natural way is to apply clustering analysis to obtain $V$ disjoint clusters as $\{B^{(j)}\}$. But it raises another problem related to the robustness of a different number of clusters and the second question becomes even hard due to the unbalanced cluster size. Model selection criterion in (\ref{eq:loss}) can be used to measure the goodness-of-fit if the cluster sizes are balanced. To ensure robustness and efficiency, we attempt to generate a larger number of clusters but with fixed cluster size, so overlappings are present here. Our numerical experiments show that the method is not sensitive to the choice of $V$. It is advocated to choose a larger number $V=50$ with a fixed subsampling proportion $M/N=0.1$, so a sample is expected to be selected 5 times. 

Denote $\mathbb{X}=[X_1, X_2, ..., X_N]^{'}$ as a $N \times p$ matrix recording the time series $\{X_t\}_{t=1}^N$ where $X_t \in \mathbb{R}^p$. The subsampling algorithm is described as follows. We firstly apply K-Means upon $\mathbb{X}$ to get $V$ cluster centroids. Then, cycle through every centroid to search for its $M$ nearest neighbors in $\mathbb{X}$. We mark the $M$ samples as 1 and the other $N-M$ as 0 at each iteration, so $V$ Bernoulli sequences get returned. If without confusion, let's denote the $M$ marked samples in the $j$-th step as $B^{(j)}$.

Since the weight is inversely proportional to the model selection criterion values or loss in (\ref{eq:loss}), one can consider a mapping function $\mathcal{F}: \mathbb{R}\mapsto \mathbb{R}$ to scale the quantity,
\[
\mathcal{F}(x)=1-\frac{x-\min(x)}{\max(x)-\min(x)}
\]
so, the weight $w^{(j)}$ measuring the importance of $j$-th Bernoulli sequence is defined by,
\begin{equation}\label{eq:w1}
w^{(j)} = \frac{\mathcal{F}(L^{(j)})}{\sum_{j=1}^V \mathcal{F}(L^{(j)})}
\end{equation}
where $L^{(j)} = L(\hat{\tau}^{(j)}_1,\hat{\tau}^{(j)}_2,...)$ is the loss of the $j$-th sequence. Thus, a $N \times V$ weighted design matrix $\mathcal{M}^{weighted}$ is fed into the time-order-kept hierarchical clustering algorithm mentioned in Section~3.3,
\begin{equation}\label{eq:weight_matrix}
\mathcal{M}^{weighted}=\mathcal{M}_{N \times V} \times diag(w^{(1)},...,w^{(V)})
\end{equation}

Another weighting technique is based on the iterative weighting algorithm proposed in (\cite{2}). In the simple case that only one change point exists within a Bernoulli process, it is hard or even impossible to detect the parameter change if the two Bernoulli parameters are too close. Indeed, one can qualify the goodness-of-fit via the difference between $p^{(j)}_{1,\tau^*}$ and $p^{(j)}_{2,\tau^*}$ or the estimated delta $|\hat{p}^{(j)}_{1,\hat{\tau}}-\hat{p}^{(j)}_{2,\hat{\tau}}|$ in practice. If we assume that the size of the two segments is equal, the estimated delta is then simplified by measuring the proportion of the two recovered segments in $B^{(j)}$. The more purity of $B^{(j)}$, the better $E_t^{(j)}$ can be fitted. It enlightens us to measure the Shannon entropy in $B^{(j)}$ as an approximation when $k>1$. 

Denote the weight of the $j$-th sequence at the current step as $w_c^{(j)}$ and the entropy of set $B^{(j)}$ at the current step as $H_c(B^{(j)})$. We can iteratively apply clustering algorithm upon the weighted matrix in (\ref{eq:weight_matrix}) and update the the entropy $H_c(B^{(j)})$ based on the recovered segments in the current step. So, the weight in the next step can be updated by
\begin{equation}\label{eq:wb}
w_{c+1}^{(j)} = 0.5~w_{c}^{(j)} + 0.5~ \frac{\mathcal{F}(H_c(B^{(j)}))}{\sum_{j=1}^V \mathcal{F}(H_c(B^{(j)}))}
\end{equation}
iteratively until convergence. Here the 0.5 is set to smooth the learning curve and make the sum of weights equal 1.

\section{Numerical Experiment}

In this section, we simulate various univariate and multivariate distributions with a known and unknown number of change points to illustrate our model performance. 

When $k$ is known, the time-order-kept hierarchical clustering algorithm is implemented with the weighting techniques proposed in (\ref{eq:w1}) and (\ref{eq:wb}). To differentiate the two weighting techniques, we denote (\ref{eq:w1}) as `simple weighting' and (\ref{eq:wb}) as `iterative weighting'. We compare the performance of our approaches with other nonparametric methods: E-Divisive by (\cite{Matteson and James}), Kernel Multiple Change Point(KernelMCP) by (\cite{Arlot}), and MultiRank by (\cite{Lung-Yut-Fong}). For the fairness of the comparison, all the procedures are conducted with the number of change point $k$ known. The results are reported in Section~6.1 and Section~6.2 for univariate and multivariate settings, respectively. When $k$ is unknown, since it is hard to quantify the false discovery rate, only stability detection is applied in Section~6.3.

Our method was implemented with $V=50$, cluster proportion $M/N=0.1$, and $\phi(N)=2$ (AIC). In the iterative weighting, we further set iteration number $R=150$ and stop criteria when the weights do not change for $10$ steps. E-Divisive was implemented via $ecp$ package with the tuning parameter $\alpha=1$, $R=499$ which was advocated in the paper. KernelMCP was implemented by Python package named $Chapydette$ using the default setting (Gaussian kernel with the Euclidean distance, bandwidth = 0.1, and $\alpha=2$). For MultiRank, we implement R codes provided in the supplementary file of (\cite{Matteson and James}).

To quantify the performance of a change point detection result, we calculate the Adjusted Rand Index(ARI) (\cite{Hubert and Arabie}) between the recovered segments and the true segments. Rand Index(RI) (\cite{Rand}) was originally used to measure the similarity between two data clustering results. As a corrected version of RI, ARI was designed to adjust for the chance of grouping elements. An ARI value of 1 corresponds to a perfect result, while negative or 0 values imply that the recovered segments are different from the underlying segments.

\subsection{Univariate Simulation}

In this section, we simulate univariate distributions with different variance or tailedness. Three data segments are generated sequentially with distributions $\mathcal{N}(0,1)$, $\mathcal{G}$, $\mathcal{N}(0,1)$, respectively. For changes in variance, $\mathcal{G} \sim \mathcal{N}(0,\sigma^2)$; and for changes in tailedness, $\mathcal{G} \sim t_{df}(0,1)$. Unbalanced segments are generated with the sample size $n$, $2n$, $n$, respectively. The size $n$ is also varied $n=100, 200, 300$ while the proportion for the three segments are kept the same.

Given the number of change points, ARI values as recovery accuracy are compared for the proposed approaches, E-Divisive, KernelMCP, and RankMCP in Table\ref{tab1} and Table\ref{tab2}. Results show that E-Divisive outperforms others in the setting of changes in variance. The overall performance is worse in the setting of changes in tailedness, but the iterative weighting approach performs slightly better than others. KernelMCP takes advantage if $\mathcal{G}$ is Gaussian distributed but fails otherwise. It shows that the kernel-type method is very sensitive to the choice of kernel. As a nonparametric approach designed for changes in mean, RankMCP consistently fails in both settings. 

It is remarked that our approach is implemented under unfavorable conditions since we attempt to clustering univariate observations with large cluster numbers. Indeed, the encoding phase can be modified accordingly by applying quantile thresholds and marking extreme observations below or above the thresholds. More discussions about encoding a single-dimensional process are referred to \cite{1}. 

\begin{table}[H]
\caption{ARI values in univariate Gaussian setting}
\label{tab1}
\scalebox{0.9}{
\hspace*{-1cm}
\begin{tabular}{@{}lcccccc@{}}
\hline
& & \multicolumn{5}{c}{univariate distribution with changes in variance} \\
\cline{3-7}
$n$ &$\sigma$ &
\multicolumn{1}{c}{simple weighting} &
\multicolumn{1}{c}{iterative weighting} &
\multicolumn{1}{c}{E-Divisive}&
\multicolumn{1}{c}{KernelMCP}&
\multicolumn{1}{c@{}}{RankMCP} \\
\hline
{100} 
& $1.5$ & 0.4216 (0.1668)&	\textbf{0.5308} (0.1883)&	0.5122 (0.2118)&	0.3146 (0.1888)&	0.3341 (0.1012) \\

& $2$  & 0.6239 (0.1993)&	0.6463 (0.1677)&	\textbf{0.8214} (0.1999)&	0.5248 (0.3274)&	0.3253 (0.0832)  \\
          
& $4$   & 0.8179 (0.1372)&	0.7634 (0.1242)&	\textbf{0.9724} (0.0382)&	0.9510 (0.0784)&	0.3281 (0.0782) \\[6pt]

{200} 
& $1.5$   & 0.5852 (0.2208)&	0.6808 (0.1911)&	\textbf{0.6697} (0.2706)&	0.4355 (0.2601)&	0.3068 (0.1085)  \\

& $2$   & 0.7788 (0.1381)&	0.7653 (0.1367)&	\textbf{0.9536} (0.0605)&	0.8942 (0.1554)&	0.3208 (0.0878) \\

& $4$   & 0.9184 (0.0425)&	0.8821 (0.0850)&	\textbf{0.9872} (0.0176)&	0.9815 (0.0157)&	0.3246 (0.0798) \\[6pt]

{300} 
& $1.5$   & 0.7311 (0.1747)&	0.7674 (0.1586)&	\textbf{0.7905} (0.2393)&	0.6048 (0.3100)&	0.3429 (0.0933) \\

& $2$   & 0.8375 (0.1019)&	0.8189 (0.1054)&	\textbf{0.9758} (0.0274)&	0.9610 (0.0338)&	0.3449 (0.0861) \\
          
& $4$   & 0.9440 (0.0311)&	0.9243 (0.0480)&	\textbf{0.9935} (0.0078)&	0.9889 (0.0088)&	0.3449 (0.0861) \\
\hline
\end{tabular}
}
\end{table}

\begin{table}[H]
\caption{ARI values in univariate student-t setting}
\label{tab2}
\scalebox{0.9}{
\hspace*{-1cm}
\begin{tabular}{@{}lcccccc@{}}
\hline
& & \multicolumn{5}{c}{univariate distribution with changes in tailedness} \\
\cline{3-7}
$n$ &$df$ &
\multicolumn{1}{c}{simple weighting} &
\multicolumn{1}{c}{iterative weighting} &
\multicolumn{1}{c}{E-Divisive}&
\multicolumn{1}{c}{KernelMCP}&
\multicolumn{1}{c@{}}{RankMCP} \\
\hline
{100} 
& $1$ & 0.5527 (0.2242)&	0.6426 (0.1780)&	\textbf{0.6880} (0.2533)&	0.2764 (0.1630)	& 0.3297 (0.1110) \\

& $2$  & 0.3742 (0.1582)&	\textbf{0.4930} (0.1980)	&	0.4542 (0.1851)&	0.2930 (0.1488)&	0.2954 (0.0944)  \\
          
& $5$   & 0.3045 (0.1380)&	\textbf{0.3910} (0.1316)&	0.3767 (0.1139)&	0.2564 (0.1380)&	0.3194 (0.1247) \\[6pt]

{200} 
& $1$   & 0.7695 (0.1585)&	0.7762 (0.1407)&	\textbf{0.8419} (0.2123)&	0.3391 (0.2089)&	0.3205 (0.0879)  \\

& $2$   & 0.4332 (0.2181)&	\textbf{0.6097} (0.1824)&	0.5055 (0.2267)&	0.2672 (0.1648)&	0.3212 (0.0967) \\

& $5$   & 0.2887 (0.1254)&	\textbf{0.3696} (0.1666)&	0.3606 (0.1204)&	0.2401 (0.1600)&	0.2899 (0.1280) \\[6pt]

{300} 
& $1$   & 0.8395 (0.0926)&	0.8220 (0.1065)&	\textbf{0.8927} (0.1709)&	0.4675 (0.2891)&	0.3325 (0.1010) \\

& $2$   & 0.5010 (0.2481)&	\textbf{0.6605} (0.2129)&	0.6547 (0.2564)&	0.3026 (0.1965)&	0.3288 (0.1191) \\
          
& $5$   & 0.3101 (0.1211)&	\textbf{0.4263} (0.1713)&	0.3484 (0.1237)&	0.2655 (0.1494)&	0.2890 (0.1166) \\
\hline
\end{tabular}
}
\end{table}

\subsection{Multivariate Simulation}

Following the generation step above, we simulate multivariate observations in this section. The observations are distributed from $\mathcal{N}_d(0,I)$, $\mathcal{N}_d(0,\Sigma)$, $\mathcal{N}_d(0,I)$, respectively. In the first part, we consider binormal distributions, in which $\Sigma=\begin{bmatrix}
1 & \rho \\ \rho & 1 \end{bmatrix}$ with different correlation $\rho$. The ARI values for E-Divisive and KernelMCP are compared in Table\ref{tab3}. Given a moderate $\rho$ value, it shows that the weighting procedures have comparable ARI values and outperform E-Divisive and KernelMCP. When $\rho$ is extremely large and the sample size is greater, the binormal distribution actually degrades to an univariate Gaussian, which explains why the ARIs of E-Divisive and KernelMCP come from behind at $\rho=0.9$ and $n=300$.

In the second part, we simulate observations with dimension $d=3,5,10$. Since KernelMCP is not easily adoptive when the dimension is more than 2, we only compare the performance of simple weighting, iterative weighting, and E-Divisive. Two types of $\Sigma$ are imposed for the generation. $\Sigma_1$ is set with diagonal elements $1$ and off-diagonal elements $\rho$; $\Sigma_2$ is set with diagonal elements $1$ and $\pm$1-off-diagonal elements $\rho$. Table\ref{tab4} shows that the simple weighting is more favorable in identifying the change point locations in the case of $\Sigma_1$; in the more complicated case of $\Sigma_2$, the iterative weighting performs the best.

\begin{table}[H]
\caption{ARI values in 2-dim Gaussian setting}
\label{tab3}
\begin{tabular}{@{}lccccc@{}}
\hline
& & \multicolumn{4}{c}{2-dim Gaussian with changes in correlation} \\
\cline{3-6}
$n$ &$\rho$ &
\multicolumn{1}{c}{simple weighting} &
\multicolumn{1}{c}{iterative weighting} &
\multicolumn{1}{c}{E-Divisive}&
\multicolumn{1}{c@{}}{KernelMCP} \\
\hline
{100} 
& $0.5$ & 0.3673 (0.1593) &	\textbf{0.4671} (0.1531) &	0.3904 (0.1303) & 0.2862 (0.1458) \\

& $0.7$  & 0.4726 (0.1935) &	\textbf{0.5542} (0.1862) &	0.4309 (0.1695) &	0.2877 (0.1411)  \\
          
& $0.9$   & \textbf{0.6993} (0.1882) &	0.6612 (0.1803)	& 0.5985 (0.2550) &	0.3453 (0.2032) \\[6pt]

{200} 
& $0.5$   & 0.3982 (0.1830)	& \textbf{0.5386} (0.1950) &	0.3960 (0.1592) &	0.2776 (0.1533)  \\

& $0.7$   & \textbf{0.7186} (0.1790) &	0.6535 (0.1912) &	0.5025 (0.2340)	 & 0.2944 (0.1494) \\

& $0.9$   & 0.8316 (0.1435)	& 0.7665 (0.1409) &	\textbf{0.8614} (0.2108) &	0.6924 (0.2816) \\[6pt]

{300} 
& $0.5$   & 0.5171 (0.2343) &	\textbf{0.5809} (0.2228)	&	0.3897 (0.1643)	& 0.2872 (0.1435) \\

& $0.7$   & \textbf{0.8209} (0.1099)	& 0.7311 (0.1558) &	0.7013 (0.2852) &	0.3196 (0.1411) \\
          
& $0.9$   & 0.8753 (0.1188)	& 0.8160 (0.1233)	&	\textbf{0.9461} (0.1403) &	0.9305 (0.1404) \\
\hline
\end{tabular}
\end{table}

\begin{table}[H]
\caption{ARI values in d-dim Gaussian setting}
\label{tab4}
\scalebox{0.8}{
\hspace*{-2cm}
\begin{tabular}{@{}lcccccccc@{}}
\hline
& & \multicolumn{3}{c}{d-dim Gaussian with off-diagonal correlation 0.5} & & \multicolumn{3}{c}{d-dim Gaussian with $\pm$1-off-diagonal correlation 0.5}\\
\cline{3-5} \cline{7-9}
$n$ &$d$ &
\multicolumn{1}{c}{simple weighting} &
\multicolumn{1}{c}{iterative weighting} &
\multicolumn{1}{c}{E-Divisive} &
&
\multicolumn{1}{c}{simple weighting} &
\multicolumn{1}{c}{iterative weighting}&
\multicolumn{1}{c@{}}{E-Divisive} \\
\hline
{100} 
& $3$ & 0.4074 (0.1965)&	\textbf{0.5520} (0.1855)& 0.4505 (0.1772) && 0.3805 (0.1565)&	\textbf{0.5276} (0.2014)&	0.4367 (0.1673) \\

& $5$  & 0.5660 (0.2095)&	\textbf{0.6364} (0.1760)& 0.4704 (0.2018)&& 0.4056 (0.1904)&	\textbf{0.5160} (0.1836)	&	0.4222 (0.1581) \\
          
& $10$   & \textbf{0.7826} (0.1595)&	0.7132 (0.1785)&	0.5777 (0.2563)& & 0.3919 (0.1426)&	\textbf{0.5362} (0.1913)&	0.4306 (0.1641)\\[6pt]

{200} 
& $3$   & 0.5056 (0.2296)&	\textbf{0.6500} (0.2027)&	0.4281 (0.1976)&& 0.5013 (0.2267)&	\textbf{0.5899} (0.1859)&	0.4041 (0.1802) \\

& $5$   & \textbf{0.8078} (0.1312)&	0.7706 (0.1649)&0.6051 (0.2646)&& 0.4594 (0.2090)&	\textbf{0.5977} (0.2101)&	0.4419 (0.1849) \\

& $10$   & \textbf{0.8819}(0.0957)&	0.8386 (0.1255)&	0.7875 (0.2644)&&0.4148 (0.1936)&	\textbf{0.6229} (0.1983)&	0.4459 (0.1905)\\[6pt]

{300} 
& $3$   & 0.6150 (0.2422)&	\textbf{0.7480} (0.1828)&0.5166 (0.2285)&&0.6258 (0.2461)&	\textbf{0.6319} (0.2108)&	0.4780 (0.2199) \\

& $5$   & \textbf{0.8649} (0.0796)&	0.8322 (0.1089)&	0.8007 (0.2601)&&0.5690 (0.2563)&	\textbf{0.6836} (0.2066)&	0.5429 (0.2343)\\
          
& $10$   & 0.8973 (0.0744)&	0.8679 (0.1059)&\textbf{0.9170} (0.1787)&& 0.5586 (0.2469)&	\textbf{0.7149} (0.1784)&	0.4969 (0.2224)\\
\hline
\end{tabular}
}
\end{table}

\subsection{Simulation for Stability Detection}

Stability detection is applied when the number of change points is unknown. Suppose we encode the continuous observations into $V$ Bernoulli sequence components, so there are $V$ change point sets obtained in total. Instead of weighting each voting set equally, we involve the simple weighting techniques and weight the votes according to the goodness-of-fit.

Denote a binormal distribution $\mathcal{N}_2(0,\begin{bmatrix}
1 & 0.7 \\ 0.7 & 1 \end{bmatrix} )$ as $\mathcal{G}_2$. In the first scenario, 3 binormal distributions are generated by $\mathcal{N}_2(0,I)$, $\mathcal{G}_2$, $\mathcal{N}_2(0,I)$ with sample size 300, 600, 300, respectively. In the change point analysis of each Bernoulli sequence, model selection criterion is applied with a different penalty coefficient $\phi(N)$. $\phi(N)$ is set from 2, a value corresponding to AIC, to $log(N)$, a value corresponding to BIC. We partition the time axis into disjoint time bins with the same length. The probability of selection is then calculated based on the summed votes falling in each bin. It shows that the results are not sensitive to the choice of penalty term. There are 6 or 7 curves always above the others regardless of the penalty coefficient. Their corresponding time bins are marked in Figure\ref{fig.penalty}. The first and second bins are (320,360) and (360, 400) which are close to the first change point located at 300; the third and fourth time bins (280,320) and (880,920) cover the true change point locations. The probability of selection for all the time points are shown in Figure\ref{fig.aic_bic}((A) for AIC and (B) for BIC). The two big spikes indicate that the number of change points is 2. By further setting a threshold at 0.1, we can obtain two consecutive time windows containing the truth change point locations.

\begin{figure}[h!]
\centering
\includegraphics[width=5.0in]{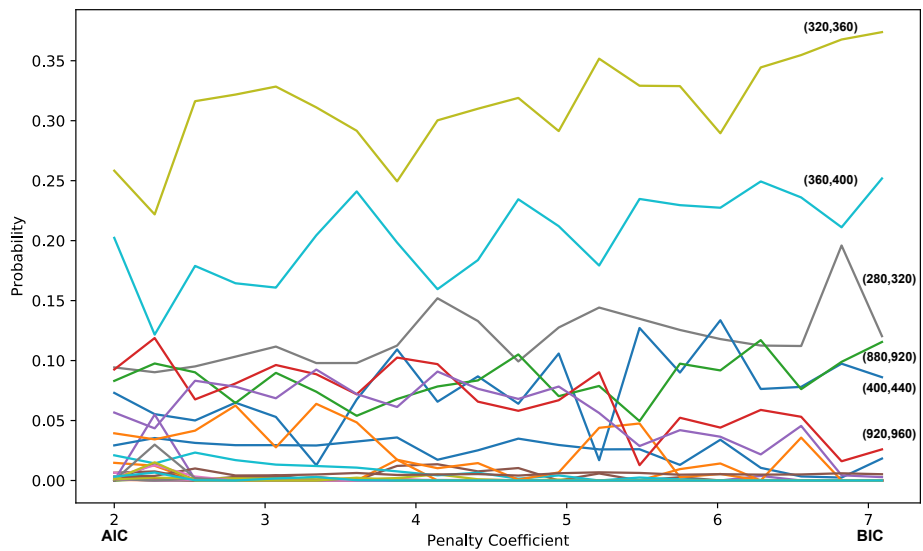}
\caption{probability of selection with different penalty coefficient $\phi(N)$; different time bins are plotted in different curves}
\label{fig.penalty}
\end{figure}

\begin{figure}[h!]
\centering
\includegraphics[width=5.0in]{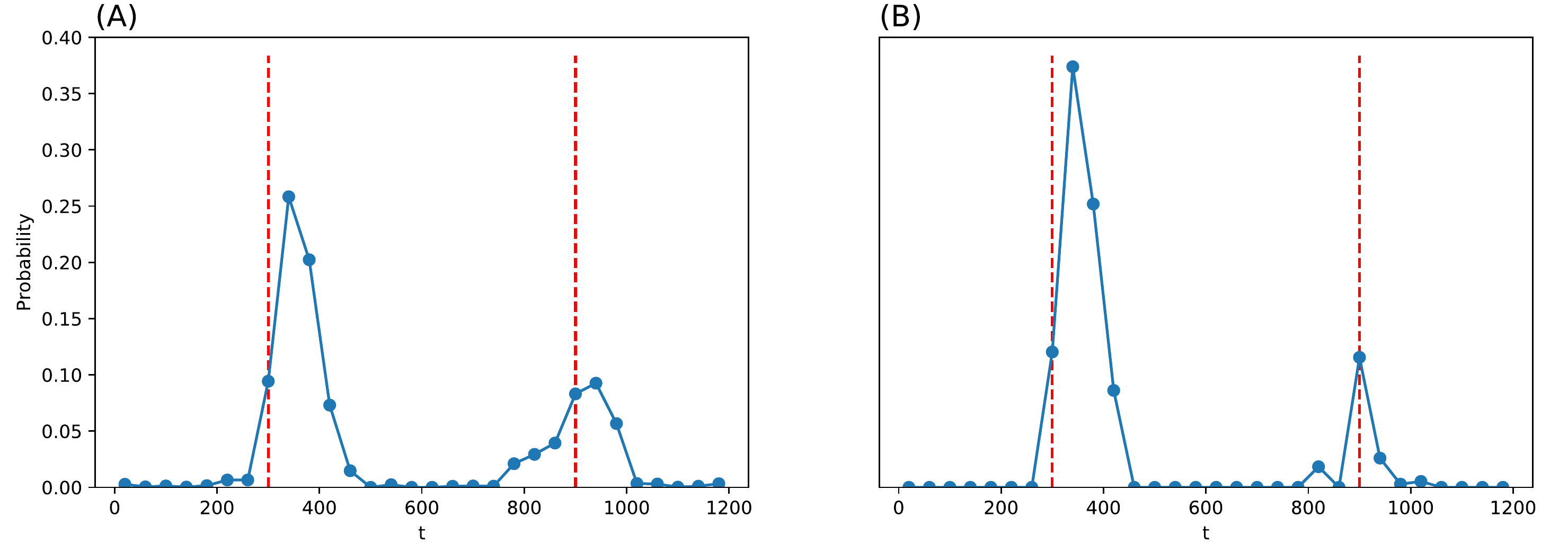}
\caption{(A) probability of selection with $\phi(N)=2$ as AIC; (B) probability of selection with $\phi(N)=log(N)$ as BIC. True change point locations are plotted in vertical lines}
\label{fig.aic_bic}
\end{figure}

In the second scenario, we make the problem more complicated by generating 7 segments $\mathcal{N}_2(0,I)$, $\mathcal{G}_2$, $\mathcal{N}_2(0,I)$, $\mathcal{G}_2$, $\mathcal{N}_2(0,I)$, $\mathcal{G}_2$, $\mathcal{N}_2(0,I)$ with equal sample size $n$. BIC is applied for model selection of each Bernoulli sequence. There turns to be 6 obvious spikes after smoothing the probability curve. Especially, when $n$ increases to 500, the local maximas can almost perfectly detect the true change points.

Indeed, stability detection gives a new perspective to discover the number of change points and measure the confidence bound simultaneously. To estimate the change point locations, one can search for the top$k$ local maxima or go back to the hierarchical clustering procedure with an estimated change point number $k$. Based on our experiment, the two estimation results are very similar. We claim the consistency should also hold based on the result in Figure\ref{fig.simulation}.

\begin{figure}[h!]
\centering
\includegraphics[width=5.0in]{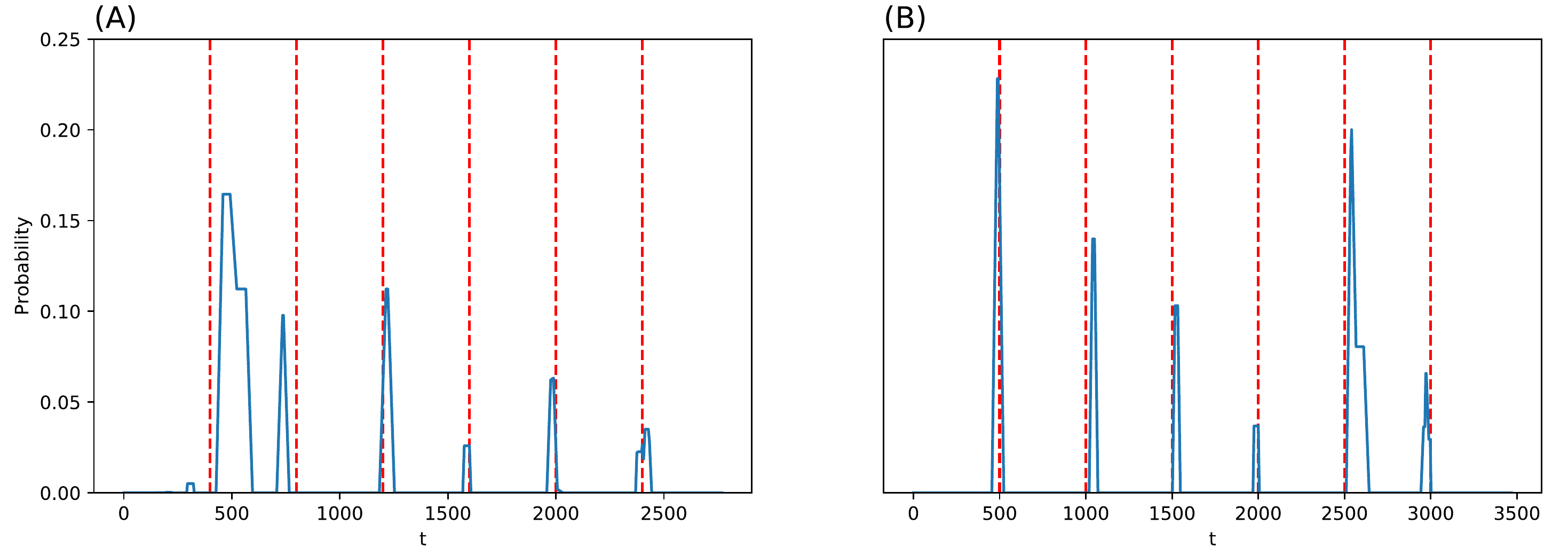}
\caption{(A) probability of selection with $n=400$; (B) probability of selection with $n=500$}
\label{fig.simulation}
\end{figure}

\section{Real Data Application}

\subsection{Genome Data}

CpG dinucleotide clusters or ‘CpG
islands’ are genome subsequences with a relatively high number of CG dinucleotides (a cytosine followed by a guanine). They are observed close to transcription start sites (\cite{Sxonov}) and play a crucial role in gene expression regulation and cell differentiation (\cite{Bird}). There were developed many computational tools for CpG island identification. A sliding window is typically employed to scan the genome sequence to figure out CpG islands based on some filtering criteria. However, the criteria are set with subjective choice (G$+$c proportion, observation versus expectation ratio, etc) and it has evolved over time. It commonly happens that different CpG island finders would provide various results. 

In this section, we implement our change point detection approach in the categorical nucleotide sequence. We believe that the proposed algorithm is able to detect an abrupt change in C-G patterns, and the estimated change point locations may help researchers to identify potential CpG islands. A contig (accession number $\textit{NT\_000874.1}$) on human
chromosome 19 was taken as an example for CpG island searching. 
The dataset is available on the website of National Center for Biotechnology Information(NCBI). 

Denote the genome sequence as $\{X_t\}_{t=1}^N$ with $X_t \in \{A,G,T,C\}$. In the encoding phase, a 0-1 sequence $\{E_t\}_t$ is generated such that $E_t=1$ if $X_{t}=C \& X_{t+1}=G$ and $E_t=0$ otherwise, for $t=1,...,N-1$. Algorithm1 is implemented to search for multiple change points in the Bernoulli sequence.
Results from a CpG island searching software CpGIE (\cite{Wang and Leung}) are shown as a benchmark for comparison. Criteria advocated by the authors are employed in the usage of CpGIE (length $\ge 500$ bp, G $+$ C content $\ge 50\%$ and CpG O/E ratio $\ge 0.60$). Note that our algorithm does not need any assumption or tuning parameter. The result in Figure$\ref{fig.DNA}$ shows that there is a high proportion of overlapping segments between ours and CpGIE's. Our approach can also find extra genome subsequence with a higher number of C-Gs which are misspecified by CpGIE.

\begin{figure}[h!]
\centering
\includegraphics[width=5in]{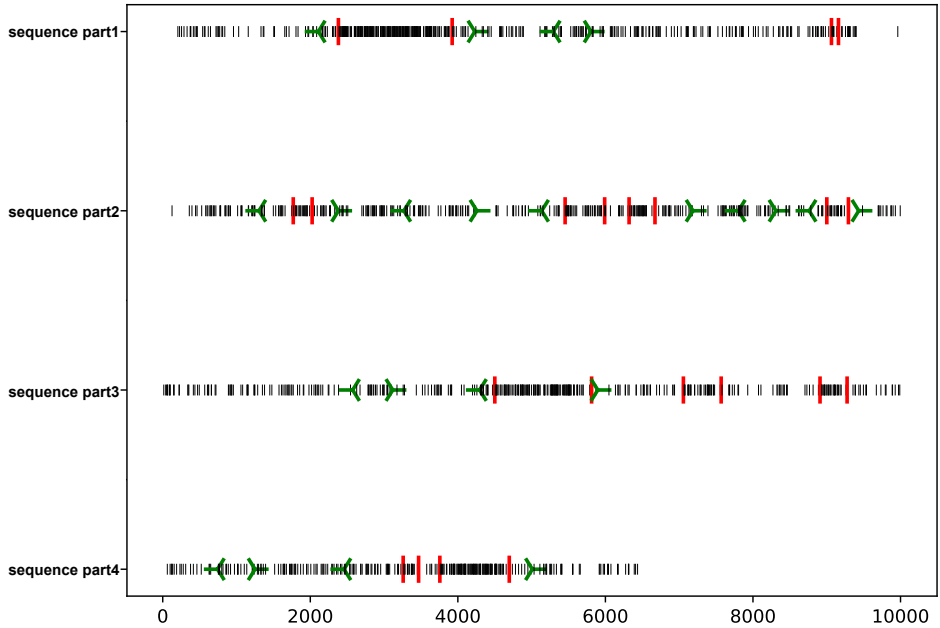}
\caption{encoded DNA sequence-CG dinucleotides are marked in black; the CpG islands discovered by CpGIE are marked in green; the estimated change point locations are marked in red}
\label{fig.DNA}
\end{figure}

\subsection{Hurricane Data}

It was widely recognized that the global temperature has risen due to anthropogenic factors, such as increased carbon dioxide emissions and other human activities. According to NOAA's 2020 global climate report, the annual temperature has increased globally at an average rate of 0.14 degrees Fahrenheit per decade since 1880 and over twice that rate (0.32 degrees Fahrenheit) since 1981. It was argued by climatologists that the warmer sea surface leads to an increasing number of stronger tropical cyclones (\cite{Emanuell}; \cite{Saunders and Lee}). However, (\cite{Landsea}) believes that the warmer sear surface increases only weak cyclones which are short and even hard to be detected. In this section, we studied the number of cyclones between 1851 and 2019. We are interested to detect potential change points embedded within the tropical cyclone history.

The dataset HURDAT2 recording the activities of cyclones in the Atlantic basin is available on the website of National Oceanic Center(NHC). NHC tracked the intensity of each tropical cyclone per 6 hours every day (at 0, 6, 12, and 18). The intensity level is categorized based on wind strength in knots, such as hurricane (intensity greater than 64 knots), tropical storm (intensity between 34 and 63 knots), tropical depression (intensity less than 34 knots). Different from (\cite{Robbins}) in categorizing cyclones, we summarize the number of time units that a category is observed, so the count is at most $4\times31$ in a month. The monthly frequency of tropical storm-level and higher-level cyclones is reported in Figure\ref{fig.hurdat}(A). If we apply 5 change points which is detected by the local maxima of stability detection in Figure\ref{fig.hurdat}(B), the time range is then partitioned based on the variation of storm count. Figure\ref{fig.hurdat}(A) shows that storms are more active in the 1880s, 1960s and after 2000. Though the global temperature trends to go upward since 1980, the storms are relatively sparse between 1980 and 2000. Thus, we tend to believe that no firm conclusion can be made yet that higher temperatures would increase the number of hurricanes.

\begin{figure}[h!]
\centering
\includegraphics[width=5.0in]{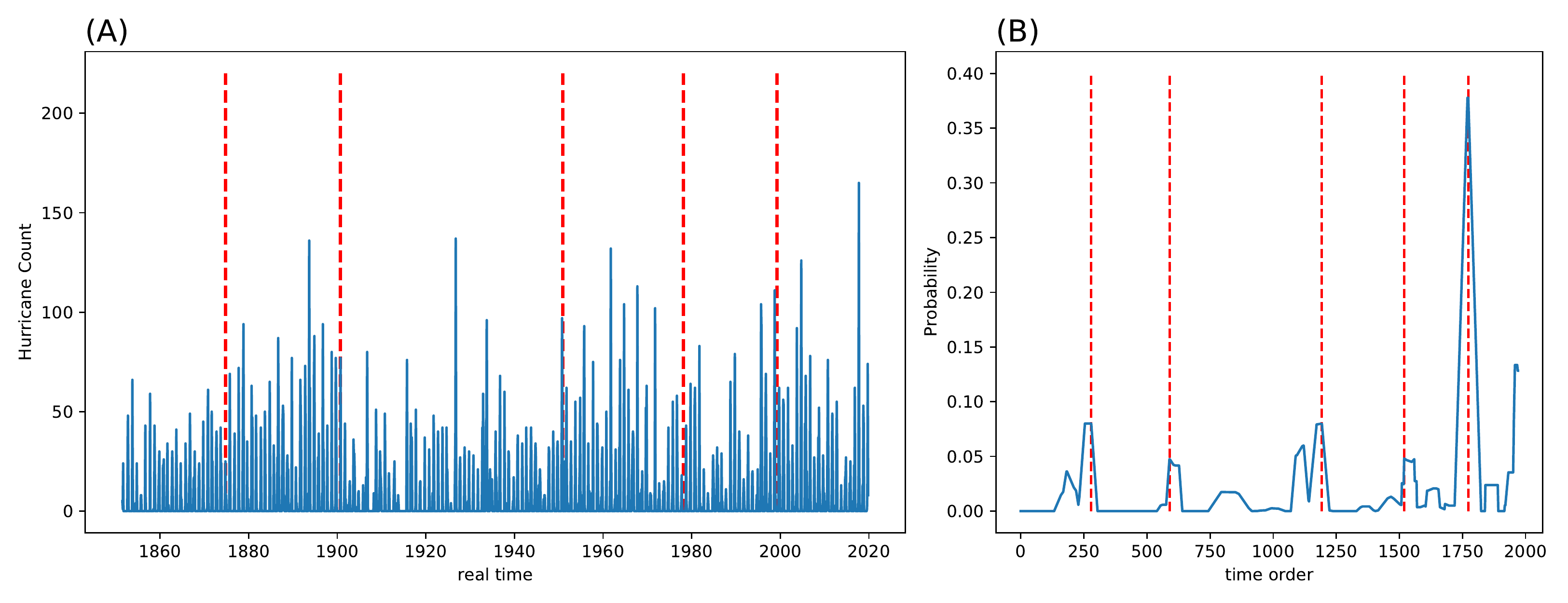}
\caption{(A) monthly hurricane counts in Atlantic basin from year 1851 to 2019; estimated change points are plotted in vertical lines. (B) probability of selection for all the time points; local maximas are plotted in vertical lines}
\label{fig.hurdat}
\end{figure}

\subsection{Financial Data}

Lastly, the proposed approach is applied to detect the abrupt time-varying dependence within bivariate stock log returns. CTSH and IBM are chosen as representative of IT Consulting subcategories of S\&P500 based on Global Industrial Classification Standard(GICS). The first and last hours in the transaction time are filtered out (so it is from 10am to 4pm), and the hourly price returns are calculated in the business days of the year 2006. A constant is added to the returns of CTSH for a better visualization in Figure$\ref{fig.pairwise}$(A), but the raw return series are analyzed. It was noted that the lagged correlation statistics are not significant based on the sample autocorrelation function of stock returns. Conditional heteroskedasticity can be studied by a more complicated time series model, like GARCH, but it is out of our concentration.

We encode the bivariate time series and apply stability detection techniques. Figure$\ref{fig.pairwise}$(B) shows that there exist 3 or 4 change points within the returns. The top3 change point locations with the highest probability are marked by vertical lines in Figure$\ref{fig.pairwise}$(A). It shows that the returns are partitioned into segments with different volatility levels. If we further look into the scatterplot between CTSH and IBM under different time partitions (left, middle, right segments) in Figure$\ref{fig.scatterplot}$, both returns in the middle phase are relatively high, and their correlation is even stronger.

\begin{figure}[h!]
\centering
\includegraphics[width=5.0in]{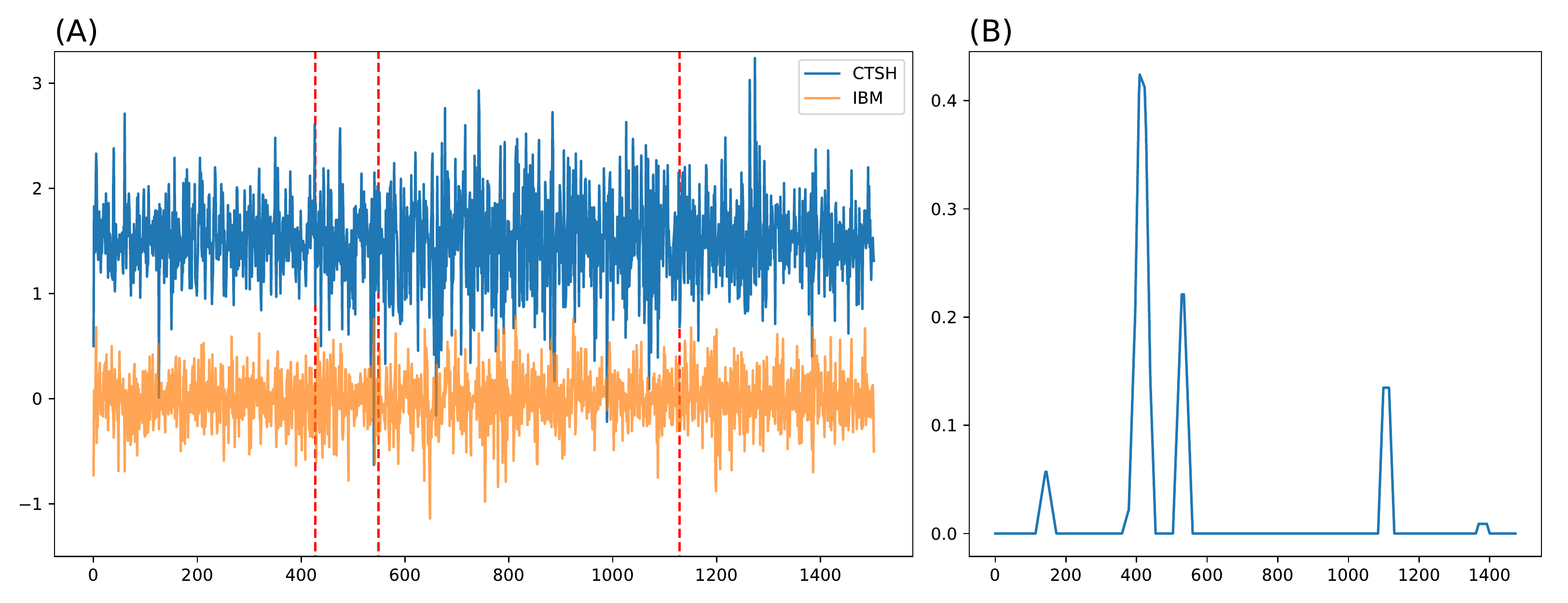}
\caption{(A) hourly index returns of CTSH and IBM in 2006; top3 change points with the highest probability of selection are plotted in vertical lines. (B) probability of selection for all time points}
\label{fig.pairwise}
\end{figure}

\begin{figure}[h!]
\centering
\includegraphics[width=5.0in]{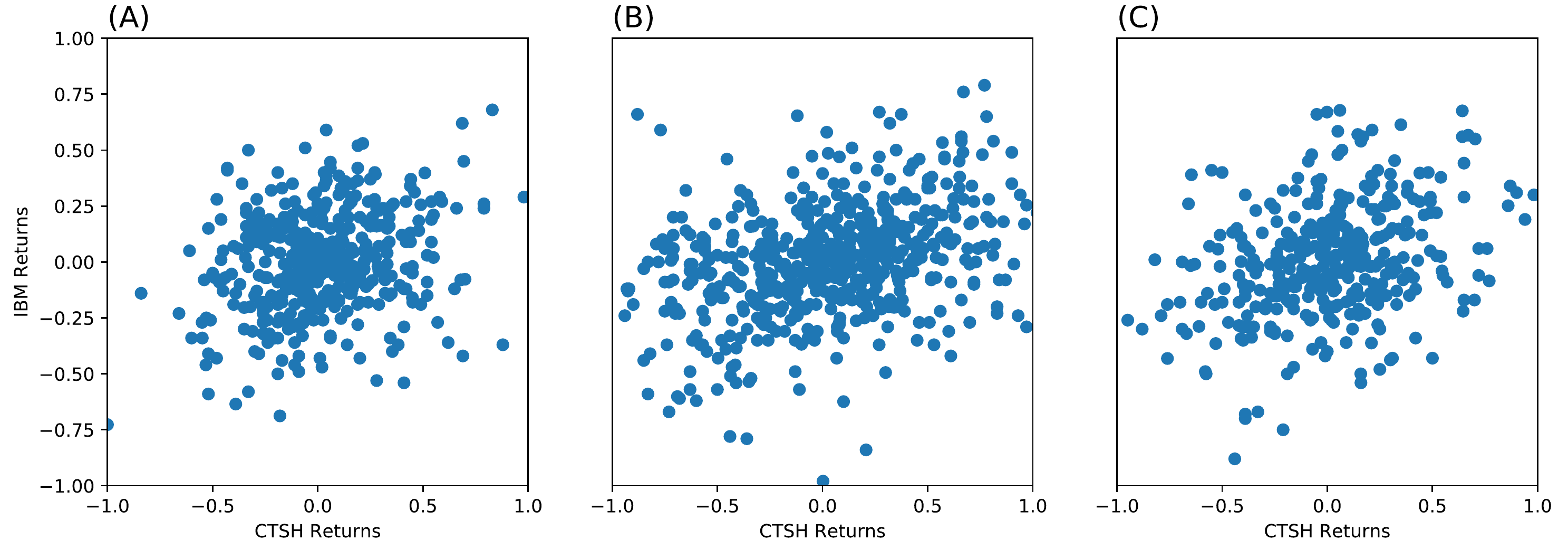}
\caption{scatterplot of returns of CTSH versus IBM; (A) observations on the left segment; (B) observations on the middle segment; (C) observations on the right segment}
\label{fig.scatterplot}
\end{figure}

\newpage

\section*{Conclusion}

In the paper, we have established a framework to encode a sequence of continuous observations into several Bernoulli processes and proposed approaches for change point detection in univariate and multivariate settings with or without a known number of change points. Theoretical work shows that the proposed method can hold both asymptotic property and finite-sample error control. Numerical and real experiments show that the approach is able to detect any type of distributional changes and can be applied to categorical, ordinal, and continuous data. Furthermore, the computational expense is reasonable with time complexity at the most expensive part $O(VM^2)$ or $O(VN^2)$, and parallel programming is applicable to decrease the complexity to $O(N^2)$.


\begin{thebibliography}{4}


\bibitem[\protect\citeauthoryear{Arlot et al.}{2019}]{Arlot}
\textsc{Arlot S.}, \textsc{Celisse A.}, \textsc{Harchaoui Z.} (2019). A Kernel Multiple Change-point Algorithm via Model Selection. \textit{Journal of Machine Learning Research}, \textbf{20}(162), 1-56.


\bibitem[\protect\citeauthoryear{Bai and Perron}{2003}]{Bai and Perron}
\textsc{Bai, J.}, and \textsc{Perron, P.} (2003). Computation and analysis of multiple structural change models. \textit{Journal of Applied Econometrics}, \textbf{18}(1), 1–22.


\bibitem[\protect\citeauthoryear{Beinrucker et al.}{2016}]{Beinrucker}
\textsc{Beinrucker, A.}, \textsc{Dogan, U.}, and \textsc{Blanchard, G.} (2016). Extensions of stability selection using subsamples of observations and covariates. \textit{Statistics and Computing}, \textbf{26}, 1059–1077.


\bibitem[\protect\citeauthoryear{Bird}{2002}]{Bird}
\textsc{Bird, A.} (2002). DNA methylation patterns and epigenetic memory. \textit{Genes and Development}, \textbf{16}(1), 6–21.


\bibitem[\protect\citeauthoryear{Bosc et al.}{2003}]{Bosc}
\textsc{Bosc, M.}, \textsc{Heitz, F.}, \textsc{Armspach, J.}, \textsc{Namer, I.}, \textsc{Gounot, D.}, and \textsc{Rumbach, L.} (2003). Automatic change detection in multimodal serial MRI: application to multiple sclerosis lesion evolution. \textit{NeuroImage}, \textbf{20}(2), 643–656.


\bibitem[\protect\citeauthoryear{Chen and Zhang}{2015}]{Chen and Zhang}
\textsc{Chen, H.}, and \textsc{Zhang, N. R.} (2015). Graph-based change-point detection. \textit{The Annals of Statistics}, \textbf{43}(1), 139-176


\bibitem[\protect\citeauthoryear{Chen and Gupta}{1997}]{Chen and Gupta}
\textsc{Chen, J.}, and \textsc{Gupta, A. K.} (1997). Testing and locating variance changepoints with application to stock prices. \textit{Journal of the American Statistical Association}, \textbf{92}(438), 739–747.


\bibitem[\protect\citeauthoryear{Chernoff and Zacks}{1964}]{Chernoff and Zacks}
\textsc{Chernoff, H.}, and \textsc{Zacks, S.} (1964). Estimating the Current Mean of a Normal Distribution which is Subjected to Changes in Time. \textit{The Annals of Mathematical Statistics}, \textbf{35}(3), 999-1018.


\bibitem[\protect\citeauthoryear{Emanuell}{2005}]{Emanuell}
\textsc{Emanuell, K. A.} (2005). Increasing destructiveness of tropical cyclones over the past 30 years. \textit{Nature}, \textbf{436}(7051), 686-688.


\bibitem[\protect\citeauthoryear{Fu and Curnow}{1990}]{Fu and Curnow}
\textsc{Fu, Y.}, and \textsc{Curnow, R. N.} (1990). Maximum likelihood estimation of multiple change points. \textit{Biometrika}, \textbf{77}(3), 563-73


\bibitem[\protect\citeauthoryear{Halpern}{1999}]{Halpern}
\textsc{Halpern A. L.} (1999) Minimally selected p and other tests for a single abrupt changepoint in a binary sequence. \textit{Biometrics}, \textbf{55}(4), 1044-1050.


\bibitem[\protect\citeauthoryear{Harchaoui and Cappe}{2007}]{Harchaoui and Cappe}
\textsc{Harchaoui, Z.}, and \textsc{Cappe, O.} (2007) Retrospective change-point estimation with kernels. In
\textit{IEEE Workshop on Statistical Signal Processing}, Madison, WI, USA, 2007, pp. 768-772


\bibitem[\protect\citeauthoryear{Hinkley and Hinkley}{1970}]{Hinkley and Hinkley}
\textsc{Hinkley, D. V.}, and \textsc{Hinkley, E. A.} (1970). Inference about the change-point in a sequence of binomial variables. \textit{Biometrika}, \textbf{57}(3), 477–488.


\bibitem[\protect\citeauthoryear{Hoover et al.}{2012}]{Hoover}
\textsc{Hoover, A.}, \textsc{Singh, A.}, \textsc{Fishel-Briwn, S.}, and \textsc{Muth, E.} (2012). Real-time detection of workload changes using heart rate variability. \textit{Biomedical Signal Processing and Control}, \textbf{7}(4), 333-341.


\bibitem[\protect\citeauthoryear{Hubert and Arabie}{1985}]{Hubert and Arabie}
\textsc{Hubert, L.}, and \textsc{Arabie, P.} (1985). Comparing partitions. \textit{Journal of Classification}, \textbf{2}(1), 193–218.


\bibitem[\protect\citeauthoryear{Hsieh et al.}{2012}]{Hsieh} 
\textsc{Hsieh, F.}, \textsc{Chen, S. C.}, and \textsc{Hwang, C. R.} (2012), Discovering Stock Dynamics through Multidimensional Volatility Phases. \textit{Quantitative Finance}, \textbf{12}, 213–230.  


\bibitem[\protect\citeauthoryear{James and Matteson }{2015}]{James and Matteson}
\textsc{James, N. A.}, and \textsc{Matteson, D. S.} (2015). ecp: An R Package for Nonparametric Multiple Change Point Analysis of Multivariate Data, \textit{Journal of Statistical Software}, \textbf{62}(7), 1-25.


\bibitem[\protect\citeauthoryear{Kander and Zacks}{1966}]{Kander and Zacks}
\textsc{Kander, Z.}, and \textsc{Zacks, S.} (1966). Test procedures for possible changes in parameters of statistical distributions occurring at unknown time points. \textit{The Annals of Mathematical Statistics}, \textbf{37}(5), 1196–1210. 


\bibitem[\protect\citeauthoryear{Kawahara and Sugiyama}{2011}]{Kawahara and Sugiyama}
\textsc{Kawahara, Y.}, and \textsc{Sugiyama, M.} (2011) Sequential change-point detection based on direct density-ratio estimation. \textit{Statistical Analysis and Data Mining}, \textbf{5}(2), 114-127


\bibitem[\protect\citeauthoryear{Landsea et al.}{2010}]{Landsea}
\textsc{Landsea, C. W.}, \textsc{Vecchi, G. A.}, \textsc{Bengtsson, L.}, and \textsc{Knutsin, T. R.} (2010). Impact of duration thresholds on Atlantic tropical cyclone counts. \textit{Journal of Climate}, \textbf{23}, 2508-2519.


\bibitem[\protect\citeauthoryear{Liu et al.}{2013}]{Liu}
\textsc{Liu, S.}, \textsc{Yamada, M.}, \textsc{Collier, N.}, and \textsc{Sugiyama, M.} (2013). Change-point detection in time-series data by relative density-ratio estimation. \textit{Neural Networks}, \textbf{43}, 72–83.


\bibitem[\protect\citeauthoryear{Lung-Yut-Fong et al.}{2015}]{Lung-Yut-Fong}
\textsc{Lung-Yut-Fong A.}, \textsc{Lévy-Leduc C.}, \textsc{Cappé O.} (2015). Homogeneity and change-point detection tests for multivariate data using rank statistics. \textit{Journal of the French Statistical Society}, \textbf{156}(4), 133-162


\bibitem[\protect\citeauthoryear{Malladi et al.}{2013}]{Malladi}
\textsc{Malladi, R.}, \textsc{Kalamangalam, G. P.}, and \textsc{Aazhang, B.} (2013). Online Bayesian change point detection algorithms for segmentation of epileptic activity. In \textit{Asilomar Conference on Signals, Systems and Computers}, Pacific Grove, CA, USA, 2013, pp. 1833-1837, 


\bibitem[\protect\citeauthoryear{Matteson and James }{2014}]{Matteson and James}
\textsc{Matteson, D.S.}, and \textsc{James, N.A.} (2014). A Nonparametric Approach for Multiple Change Point Analysis of Multivariate Data. \textit{Journal of the American Statistical Association}, \textbf{109}(505), 334-345.


\bibitem[\protect\citeauthoryear{Meinshausen and Buhlmann }{2010}]{Meinshausen and Buhlmann}
\textsc{Meinshausen, N.}, and \textsc{Buhlmann, P.} (2010). Stability selection. \textit{Journal of the Royal Statistical Society. Series B}, \textbf{72}(4), 417–473.


\bibitem[\protect\citeauthoryear{Miller and Siegmund}{1982}]{Miller and Siegmund}
\textsc{Miller, R.}, and \textsc{Siegmund, D.} (1982) Maximally Selected Chi Square Statistics. \textit{Biometrics}, \textbf{38}(4), 1011-1016


\bibitem[\protect\citeauthoryear{Muggeo and Adelfio}{2011}]{Muggeo and Adelfio}
\textsc{Muggeo, V. M.}, and \textsc{Adelfio, G.} (2011). Efficient change point detection for genomic sequences of continuous measurement. \textit{Bioinformatics}, \textbf{27}(2), 161-166.


\bibitem[\protect\citeauthoryear{Olshen and Venkatraman}{2004}]{Olshen and Venkatraman}
\textsc{Olshen, A. B.}, and \textsc{Venkatraman, E.} (2004). Segmentation for the analysis of array-based DNA copy number data. \textit{Biostatistics}, \textbf{5}(4), 557–572. 


\bibitem[\protect\citeauthoryear{Page}{1954}]{Page}
\textsc{Page, E. S.} (1954). Continuous inspection schemes. \textit{Biometrika}, \textbf{41}(1/2), 100–115.


\bibitem[\protect\citeauthoryear{Pettitt}{1980}]{Pettitt}
\textsc{Pettitt, A. N.} (1980) A simple cumulative sum type statistic for the change-point problem with zero-one observations. \textit{Biometrika}, \textbf{67}(1), 79-84 


\bibitem[\protect\citeauthoryear{Picard et al.}{2005}]{Picard}
\textsc{Picard, F.}, \textsc{Robin, S.}, \textsc{Lavielle, M.}, \textsc{Vaisse C.}, and \textsc{Daudin J.} (2005). A statistical approach for array CGH data analysis. \textit{BMC Bioinformatics}, \textbf{6}(27).


\bibitem[\protect\citeauthoryear{Rand}{1971}]{Rand}
\textsc{Rand, W. M.} (1971). Objective criteria for the evaluation of clustering methods. \textit{Journal of the American Statistical Association}, \textbf{66}(336), 846–850.


\bibitem[\protect\citeauthoryear{Robbins et al.}{2011}]{Robbins}
\textsc{Robbins, M. W.}, \textsc{Lund, R. B.}, \textsc{Gallagher C. M.}, and \textsc{Lu Q.} (2011). Changepoints in the North Atlantic tropical cyclone record. \textit{Journal of the American Statistical Association}, \textbf{106}(493), 89-99.


\bibitem[\protect\citeauthoryear{Rosenfield et al.}{2010}]{Rosenfield}
\textsc{Rosenfield, D.}, \textsc{Zhou, E.}, \textsc{Wilhelm, F. H.}, \textsc{Conrad, A.}, \textsc{Roth, W. T.}, and \textsc{Meuret, A. E.} (2010). Change point analysis for longitudinal physiological data: Detection of cardio-respiratory changes preceding panic attacks. \textit{Biological Psychology}, \textbf{84}(1), 112–120.


\bibitem[\protect\citeauthoryear{Saunders and Lee}{2008}]{Saunders and Lee}
\textsc{Saunders, M. A}, and \textsc{Lee, A. S.} (2008), Large contributions of sea surface warming to recent increase in Atlantic hurricane activity. \textit{Nature}, \textbf{451}(7178), 557-560.


\bibitem[\protect\citeauthoryear{Sxonov et al.}{2006}]{Sxonov}
\textsc{Sxonov S.}, \textsc{Berg P.}, and \textsc{Brutlag D.} (2006). A genome-wide analysis of CpG dinucleotides in the human genome distinguishes two distinct classes of promoters. \textit{Proceedings of the National Academy of Sciences of the United States of America}, \textbf{103}(5), 1412-7.


\bibitem[\protect\citeauthoryear{Talih and Hengartner}{2005}]{Talih and Hengartner}
\textsc{Talih, M.}, and \textsc{Hengartner, N.} (2005), Structural learning with time-varying components: tacking the cross-section of financial time series. \textit{Journal of the Royal Statistical Society. Series B}, \textbf{67}(3), 321-341.


\bibitem[\protect\citeauthoryear{Vostrikova}{1981}]{Vostrikova}
\textsc{Vostrikova, L. J.} (1981) Detecting ``Disorder'' in Multidimensional Random Processes. \textit{Soviet Mathematics Doklady}, \textbf{24}, 55-59.


\bibitem[\protect\citeauthoryear{Wang and Hsieh}{2021a}]{1} 
\textsc{Wang, X.}, and \textsc{Hsieh, F.} (2021). Unraveling S$\&$P500 Stock Volatility and Networks - An Encoding and Decoding Approach. \textit{arXiv:2101.09395}. Available at \url{http://arxiv.org/abs/2101.09395}.


\bibitem[\protect\citeauthoryear{Wang and Hsieh}{2021b}]{2} 
\textsc{Wang, X.}, and \textsc{Hsieh, F.} (2021). Discovering Multiple Phases of Dynamics by Dissecting Multivariate Time Series. \textit{arXiv:2103.04615}. Available at \url{http://arxiv.org/abs/2103.04615}.


\bibitem[\protect\citeauthoryear{Wang and Leung}{2004}]{Wang and Leung}
\textsc{Wang Y.}, and \textsc{Leung F. C.} (2004) An evaluation of new criteria for CpG islands in the human genome as gene markers. \textit{Bioinformatics}, \textbf{20}(7), 1170–1177.


\bibitem[\protect\citeauthoryear{Yao}{1988}]{Yao}
\textsc{Yao, Y.-C.} (1988). Estimating the number of change-points via Schwarz’ criterion. \textit{Statistics and Probability Letters}, \textbf{6}(3), 181–189.


\bibitem[\protect\citeauthoryear{Zou et al.}{2014}]{Zou}
\textsc{Zou, C.}, \textsc{Yin, G.}, \textsc{Feng, L.}, and \textsc{Wang, Z.} (2014). Nonparametric maximum likelihood approach to multiple change-point problems. \textit{The Annals of Statistics}, \textbf{42}(3), 970–1002.





%YUN-XIN FU AND R. N. CURNOW (1990)Locating a changed segment in a sequence of Bernoulli variables. Biometrika (1990), 77, 2, pp. 295-304.(Fu and Curnow 1990) extended the single change point test to the two change points setting where the distance between the two change points is known. 

%K. J Worsley (1983) The power of likelihodo ratio and cumulative sum tests for a cahgne in a binomial probability. Biometrika, 70(2): 455-464


\end{thebibliography}
\end{document}